\newtheorem{theorem}{Proposition}
\newtheorem{cor}[theorem]{Corollary}
\begin{document}

\title{A New Atomic Norm for DOA Estimation With Gain-Phase Errors}

\author{Peng~Chen,~\IEEEmembership{Member,~IEEE},
	Zhimin~Chen,~\IEEEmembership{Member,~IEEE}, Zhenxin~Cao,~\IEEEmembership{Member,~IEEE}, 
 Xianbin~Wang,~\IEEEmembership{Fellow,~IEEE}
\thanks{This work was supported in part by the  National Natural Science Foundation of China (Grant No. 61801112), the Open Program of State Key Laboratory of Millimeter Waves at Southeast University (Grant No. K202029), the Natural Science Foundation of Jiangsu Province (Grant No. BK20180357)
	 the foundation of Shannxi Key Laboratory of Integrated and Intelligent Navigation (Grant  No. SKLIIN-20190204), the Fundamental Research Funds for the Central Universities (Grant  No. 2242020K40114). \textit{(Corresponding author: Peng Chen)}}
\thanks{P.~Chen and Z.~Cao are with the State Key Laboratory of Millimeter Waves, Southeast University, Nanjing 210096, China (email: \{chenpengseu,caozx\}@seu.edu.cn).}
\thanks{Z.~Chen is with the School of Electronic and Information, Shanghai Dianji University, Shanghai 201306, China (email: chenzm@sdju.edu.cn).}
\thanks{X.~Wang is with the Department of Electrical and Computer Engineering, Western University, Canada (e-mail: xianbin.wang@uwo.ca).}
}

\markboth{IEEE Transactions on Signal Processing}%
{Shell \MakeLowercase{\textit{et al.}}: Bare Demo of IEEEtran.cls for Journals}

\maketitle

\begin{abstract}
The problem of direction of arrival (DOA) estimation  has been studied for decades as an essential technology in enabling radar, wireless communications, and array signal processing related applications. In this paper, the DOA estimation problem in the scenario with gain-phase errors is considered, and a sparse model is formulated by exploiting the signal sparsity in the spatial domain. By proposing a new atomic norm, named as GP-ANM, an optimization method is formulated via deriving a dual norm of GP-ANM. Then, the corresponding semidefinite program (SDP) is given to estimate the DOA efficiently, where the SDP is obtained based on the Schur complement. Moreover, a regularization parameter is obtained theoretically in the convex optimization problem. Simulation results show that the proposed method outperforms the existing methods, including the subspace-based and sparse-based methods in the scenario with gain-phase errors.
\end{abstract}

\begin{IEEEkeywords}
	Atomic norm, DOA estimation, semidefinite program, gain-phase error, sparse signals.
\end{IEEEkeywords}

\section{Introduction} \label{sec1}

The estimation problem of the direction of arrival (DOA) has been studied for decades in different applications encompassing radar, wireless communications, and array signal processing~\cite{8233171}. Traditionally, the DOA is estimated by the discrete Fourier transform (DFT)-based methods~\cite{Liu2016,Burintramart2007,Kim2015}, where the antenna arrays provide spatial samplings. The DFT-based methods realize the DOA estimation via the DFT of received signals spatially sampled by the antenna array, with its inherent sampling resolution characterized by  the \emph{Rayleigh criterion}~\cite{Wolfgang}.

To overcome the resolution limit of the Rayleigh criterion, different super-resolution methods for DOA estimation have been proposed, and the subspace-based methods have been widely used in the scenarios with multiple measurements to estimate the covariance matrix of received signals in the antenna array. For example, the multiple signal classification (MUSIC) method~\cite{ralph1986} and the estimating signal parameters via rotational invariance techniques (ESPRIT) method~\cite{roy1989}, where the MUSIC method estimate the DOA with the noise subspace but the ESPRIT uses the signal subspace. Then, the extension algorithms based on the MUSIC and ESPRIT methods are proposed in the present papers, such as Root-MUSIC method~\cite{Zoltowski1993}, space-time MUSIC method~\cite{8309290}, G-MUSIC method~\cite{7180404}, higher order ESPRIT and virtual ESPRIT~\cite{539037}, etc. Ref.~\cite{ELDAR2018} also develops a frequency estimation method in the continuous domain with sensor calibration and off-grid problems.

Recently, to further improve the DOA estimation performance, the compressed sensing (CS) methods have been proposed by exploiting the signal sparsity in the spatial domain~\cite{Xiong2018,Li2017,Yang20166,yao2011}. Ref.~\cite{yao2011,Chen2017} propose the CS-based DOA estimation methods in the multiple-input and multiple-output (MIMO) radar systems. A compressed sparse array scheme is proposed in~\cite{8386702}. However, in the CS-based methods, the dictionary matrix is formulated by discretizing the spatial domain. Consequently, the corresponding dictionary matrix is formulated using the discretized spatial angles. When the DOAs are not exactly at the discretized angles, which  introduces the \emph{off-grid} errors, and the off-grid methods have been proposed to solve this problem~\cite{6576276}. For example, the structured dictionary mismatch is considered, and the corresponding sparse reconstruction methods are proposed in~\cite{6867380}. A sparse Bayesian inference is given in~\cite{Yang2012} with the off-grid consideration. Moreover, an iterative reweighted method~\cite{Fang2014} estimates the off-grid and sparse signals jointly. In~\cite{zhu_grid-less_2019}, the line spectral estimation is investigated by the Bayesian variational inference using multiple measurement vector (MMV), which outperforms the state-of-the-art MMV methods. Additionally, in~\cite{zhu_multi-snapshot_2019}, a multi-snapshot Newtonized orthogonal matching pursuit (MNOMP) algorithm is given for MMV scenario with relatively low computational complexity. With the prior knowledge of the signal structure, a general SDP method is proposed in~\cite{7145484} to  recover the signal using the  positive trigonometric polynomials, and the perfect signal reconstruction is achieved with sufficient prior information.

The super-resolution methods based on the sparse theory and avoiding the discretization have been proposed. In~\cite{doi:10.1002/cpa.21455},  \emph{total variation norm} is introduced, and show that the exact locations and amplitudes of the line spectrum can be recovered by solving a convex optimization problem. Therefore, the DOA estimation problem can also be described as a type of \emph{line spectral estimation} problem~\cite{6560426}, and a generalized method is proposed in~\cite{8051116} by formulating the sparse signal recovery problems over a continuously indexed dictionary. Then, the \emph{atomic norm} as a specific  form of total variation norm is formulated~\cite{DECASTRO2012336,unser2019,chi_harnessing_2019,7435246}, and an upper bound on the optimization of an atomic norm is given in~\cite{7917313}. Atomic norm minimization (ANM) method~\cite{chi_harnessing_2019} with multiple measurement vectors (MMV)  is proposed~\cite{ZaiYang2016}, and  a Toeplitz covariance matrix reconstruction approach is also given in~\cite{7903732} to formulate a low-rank matrix reconstruction during the DOA estimation. For the general antenna geometries, a method based on total variation minimization is proposed in~\cite{barzegar2017estimation}, where the theoretic guarantee for DOA estimation is derived. In~\cite{8320855}, a family of nonconvex penalties is used to approximate the rank norm, and an iterative reweighted strategy is also proposed to achieve a better performance than the atomic norm method.

However, the existing ANM methods assume the perfect antenna array during the DOA estimation without considering the inconsistent antennas, where a polynomial with steering vector is formulated to estimate the DOA and will be mismatch in the scenario with inconsistent array~\cite{doi:10.1002/cpa.21455}.  The quantized noisy magnitudes are used to reconstruct the sparse signal in~\cite{8713584}, and an  approximation is used for the problem of sparse signal reconstruction for  the approximate message passing method. The Cram\'{e}r-Rao lower bound (CRLB) with quantization is given in~\cite{8340173}, and the algorithm using atomic norm soft thresholding is shown for the sparse reconstruction.  In the practical antenna array, the gain-phase errors among antennas degrade the DOA estimation performance~\cite{5986746,7885556}.
The CS-based method for the DOA estimation is proposed in~\cite{8527362}, and ref.~\cite{8419297} describes the localization method for the near-field sources with gain-phase errors. However, the DOA estimation based on the  \emph{gridless} sparse theory in the scenario has not been proposed.  

In this paper, the DOA estimation problem in the scenario with gain-phase errors has been investigated. The technical contributions of this paper are summarized below:
\begin{itemize}
    \item \textbf{A new atomic norm for DOA estimation with gain-phase errors:} By introducing additional parameters in MMV, a new atomic norm is formulated, and the corresponding dual norm is theoretically obtained.  An optimization problem is formulated for the DOA estimation.
    \item \textbf{An semidefinite program (SDP) problem for the new atomic norm:} To solve the new atomic norm efficiently, an SDP problem is formulated by the Schur complement.
    \item \textbf{Theoretical expressions for the regularization parameter:} In the atomic norm-based method, the regularization parameter determines the DOA estimation performance and is theoretically obtained to describe the reconstruction bound.
\end{itemize}

The remainder of this paper is organized as follows. The DOA estimation model in the uniform linear array (ULA) with gain-phase is formulated in Section~\ref{sec2}. The atomic norm-based DOA estimation method is proposed in Section~\ref{sec3}.  The regularization parameter is theoretically obtained in Section~\ref{sec4}. The CRLB of DOA estimation is given in Section~\ref{sec5}, and the simulation results are shown in Section~\ref{sec6}. Finally, Section~\ref{sec7} concludes the paper.

\textit{Notations:}  $\operatorname{diag}\{\boldsymbol{a}\}$ denotes a diagonal matrix and the diagonal entries are from the vector $\boldsymbol{a}$. $(\cdot)^\text{T}$ and $(\cdot)^\text{H}$ denote  the  matrix transpose and the  Hermitian transpose, respectively. $\|\cdot\|_1$, $\|\cdot\|_2$, $\|\cdot\|_F$ denote  the $\ell_1$ norm, the $\ell_2$ norm, and the Frobenius norm, respectively. $\|\cdot\|^*$ denotes the dual norm. $\boldsymbol{I}_N$ denotes an $N\times N$ identity matrix.   $\otimes$ denotes  the Kronecker product. $\operatorname{Tr}\left\{\cdot\right\}$ denotes the trace of a matrix.  $\mathcal{R}\{a\}$ denotes the real part of complex value $a$.  The boldface capital letters denote the matrix, such as $\boldsymbol{A}$, and  the lower-case letters denote the vector, such as $\boldsymbol{a}$.

\section{System Model With Gain-Phase Errors}\label{sec2}
In an ULA system, the DOA is estimated from the received signal by the antenna array, where a steering vector is used to describe the gain and phase among the perfect antennas. However, the gain-phase errors could cause the model mismatch in characterizing  the steering vector, which eventually degrades the DOA estimation performance. Suppose the DOA estimation problem for $K$ signals in the ULA with unknown gain-phase errors, and the received signal with $P$ snapshots (multiple measurements) can be expressed as 
\begin{align}\label{sys}
	\boldsymbol{Y}=\boldsymbol{GAS}+\boldsymbol{N},
\end{align}
where $\boldsymbol{Y}\in\mathbb{C}^{N\times P}$ and $N$ denotes the number of antennas,  and the spacing between neighboring antennas is $d$. The signals are denoted by a matrix $\boldsymbol{S}\triangleq\begin{bmatrix}
	\boldsymbol{s}_0,\boldsymbol{s}_1,\dots,\boldsymbol{s}_{K-1}
\end{bmatrix}^{\text{T}}$, where the $k$-th signal is defined as $\boldsymbol{s}_k\triangleq\begin{bmatrix}
	s_{k,0},s_{k,1},\dots,s_{k,P-1}
\end{bmatrix}^{\text{T}}$. The steering matrix is denoted as $\boldsymbol{A}\triangleq\begin{bmatrix}
	\boldsymbol{a}(\theta_0),\boldsymbol{a}(\theta_1),\dots,\boldsymbol{a}(\theta_{K-1})
\end{bmatrix}$, where $\theta_k$ is the DOA of the $k$-th signal. The steering vector is defined as 
\begin{align}
	\boldsymbol{a}(\theta)\triangleq\begin{bmatrix}
		1,e^{j \xi\sin\theta},\dots,e^{j (N-1)\xi \sin\theta}
	\end{bmatrix}^{\text{T}},
\end{align}
where $\xi \triangleq \frac{2\pi d}{\lambda}$ and $\lambda$ denotes the wavelength. In the imperfect ULA systems, the received signals are effected by the antenna inconsistency, and we use a diagonal matrix $\boldsymbol{G}$ in (\ref{sys}) to describe the gain-phase errors. The diagonal matrix $\boldsymbol{G}\in\mathbb{C}^{N\times N}$  can be expressed as
\begin{align}\label{G}
	\boldsymbol{G}&\triangleq \left(\boldsymbol{I}_N+\operatorname{diag}\{\boldsymbol{g}\}\right)\operatorname{diag}\{e^{j \boldsymbol{\phi}}\},
\end{align}
where we define $ \boldsymbol{g} \triangleq\begin{bmatrix}
g_0,g_1,\dots,g_{N-1}
\end{bmatrix}^{\text{T}}$ as the gain-error vector ($g_n\in\mathbb{R} $)  and $\boldsymbol{\phi} \triangleq\begin{bmatrix}
\phi_0,\phi_1,\dots,\phi_{N-1}
\end{bmatrix}^{\text{T}}$ as the phase-error vector ($\phi_n \in[0,2\pi)$).

In this paper,  by exploiting the signal sparsity in the spatial domain, we will estimate the DOA parameters $\boldsymbol{\theta}=\begin{bmatrix}
	\theta_0,\theta_1,\dots,\theta_{K-1}
\end{bmatrix}^{\text{T}}$ from the received signal $\boldsymbol{Y}$ with the unknown antenna inconsistency including the gain errors $\boldsymbol{g}$ and the phase errors $\boldsymbol{\phi}$. To avoid the discretized grids in the spatial domain, we will propose a new atomic norm and formulate the DOA estimation problem as an optimization problem with new atomic norm.

\section{Atomic Norm-Based Gridless DOA Estimation}\label{sec3}
\subsection{Preliminary Atomic Norm}
To improve the DOA estimation performance by exploiting the signal sparsity, the ANM-based methods have been proposed. Different from the exiting sparse-based methods using a dictionary matrix formulated by the discretized angles, such as the $\ell_1$ norm method~\cite{6129390,8667328,5783354}, the mixed $\ell_{2,0}$ norm approximation~\cite{5466152}, the ANM methods reconstruct the sparse signals without discretizing the spatial domain, so the ANM methods are the \emph{gridless} sparse methods~\cite{6998075}.

Usually, for the DOA estimation with perfect antennas, the  system mode is formulated  as
\begin{align}\label{perfect}
	\boldsymbol{Y}=\boldsymbol{AS}+\boldsymbol{N},
\end{align}
so the \emph{atomic set} is defined as~\cite{8259341,7484756,7313018}
\begin{align}
	\mathcal{A}\triangleq \left\{\boldsymbol{a}(\theta)\boldsymbol{b}^{\text{T}}: \theta\in[0,2\pi), \|\boldsymbol{b}\|_2=1\right\}.
\end{align}
Then, the DOA estimation in (\ref{perfect}) is transferred into the following optimization problem (ANM)
\begin{align} 
	\min_{\boldsymbol{X}} \frac{1}{2} \|\boldsymbol{Y}-\boldsymbol{X}\|^2_{F}+\tau\|\boldsymbol{X}\|_{\mathcal{A}}, 
\end{align} 
where the \emph{atomic norm} is defined as
\begin{align}
	\|\boldsymbol{X}\|_{\mathcal{A}}&=\inf\left\{t>0: \boldsymbol{X}\in t\operatorname{conv}(\mathcal{A})\right\}\\
	& = \inf\left\{
	\sum_k c_k:\boldsymbol{X}=\sum_k c_k \boldsymbol{a}(\theta)\boldsymbol{b}^{\text{T}}, c_k\geq 0
	\right\}.\notag
\end{align}
Then, the ANM problem can be solved by SDP~\cite{6560426,DBLP05238,6576276}.

\subsection{New Atomic Norm Method for DOA Estimation}
\subsubsection{The Definition of New Atomic Norm}
Without the noise, the received signal can be also expressed as
\begin{align}\label{X}
\boldsymbol{X}=\boldsymbol{GAS}= \sum^{K-1}_{k=0}(\boldsymbol{I}_N+\operatorname{diag}\{\boldsymbol{e}\})\boldsymbol{a}(\theta_k)\boldsymbol{s}^{\text{T}}_k.
\end{align}
From the gain-phase model in (\ref{G}), the antenna inconsistency $\boldsymbol{e}$  in (\ref{X}) is $\boldsymbol{e}=\left(\boldsymbol{1}_N+  \boldsymbol{g}  \right)\operatorname{diag}\{e^{j \boldsymbol{\phi}}\}-\boldsymbol{1}_N$, where $\boldsymbol{1}_N$ is a $N\times 1$ vector with all the entries being $1$.

Then, we propose a new atomic decomposition to describe $\boldsymbol{X}$ in the scenario with gain-phase errors and  to improve the  robustness, and it is defined as  
\begin{align}\label{A0}
\|\boldsymbol{X}\|_{\tilde{\mathcal{A}},0}\triangleq \Bigg\{
K: &\boldsymbol{X}=\sum_{k=0}^{K-1} b_k (\operatorname{diag}\{\boldsymbol{e}\}+\boldsymbol{I}_N) \boldsymbol{a}(\theta_k)\boldsymbol{d}_k^{\text{T}},\notag \\
&\|\boldsymbol{e}\|_2 \leq C_{\text{e}}, \|\boldsymbol{d}_k\|_2 \leq 1,b_k\geq 0 
\Bigg\},
\end{align}
where $C_{\text{e}}$ is used to control the gain and phase errors.
Note that, the $\ell_2$ norm for the gain-phase error $\| \boldsymbol{e}\|_2\leq C_{\text{e}}$ can be easily extended to the sparse norm $\| \boldsymbol{e}\|_1\leq C_{\text{e}}$. When we have $\| \boldsymbol{e}\|_1\leq C_{\text{e}}$, we can obtain $\| \boldsymbol{e}\|_2\leq C_{\text{e}}$ with $\|\boldsymbol{e}\|_1\geq \|\boldsymbol{e}\|_2$.  Therefore, the proposed atomic norm can be used in the scenario with the sparse gain-phase errors, where only a few antennas are inconsistent.

However, it is not computationally feasible  to find the minimum $K$ in (\ref{A0}) by the atomic decomposition of $\boldsymbol{X}$. A new atomic norm $\ell_{\tilde{\mathcal{A}}}$ is proposed by a convex relaxation of $\ell_{\tilde{\mathcal{A}},0}$, and is defined as
\begin{align}\label{atomic}
	\|\boldsymbol{X}\|_{\tilde{\mathcal{A}}}&\triangleq  \inf\Bigg\{
 \sum_k b_k \bigg|  
	\boldsymbol{X}=\sum_{k} b_k(\operatorname{diag}\{\boldsymbol{e}\}+\boldsymbol{I}_N) \boldsymbol{a}(\theta_k)\boldsymbol{d}_k^{\text{T}},\notag\\
 	&\qquad\qquad \|\boldsymbol{e}\|_2 \leq C_{\text{e}}, \|\boldsymbol{d}_k\|_2 \leq 1,b_k\geq 0  
	\Bigg\}\\
	& = \inf\Bigg\{
 \|\boldsymbol{b}\|_1 \bigg|
 \boldsymbol{X}=\sum_{k} b_k(\operatorname{diag}\{\boldsymbol{e}\}+\boldsymbol{I}_N) \boldsymbol{a}(\theta_k)\boldsymbol{d}_k^{\text{T}},\notag\\
 &\qquad\qquad \|\boldsymbol{e}\|_2 \leq C_{\text{e}}, \|\boldsymbol{d}_k\|_2 \leq 1,b_k\geq 0 
	\Bigg\},\notag
\end{align}
where $\boldsymbol{b}$ is defined as $\boldsymbol{b}\triangleq \begin{bmatrix}
b_0,b_1,\dots,b_{K-1}
\end{bmatrix}^{\text{T}}$. This optimization is named as Gain-Phase ANM (GP-ANM) to be different  from the existing ANM methods. From GP-ANM, the novel DOA estimation method will be proposed, and the corresponding algorithm will be given in the following sections. 

\subsubsection{DOA Estimation Using GP-ANM}

With the received signals $\boldsymbol{Y}$ and the additive noise $\boldsymbol{N}$, the DOA estimation problem can be described by following optimization problem
\begin{align}\label{op}
	\min_{\boldsymbol{X}} \frac{1}{2} \|\boldsymbol{Y}-\boldsymbol{X}\|^2_{F}+\tau\|\boldsymbol{X}\|_{\tilde{\mathcal{A}}}, 
\end{align}
where the first term is used to control the reconstruction performance and the second one is for the sparsity of $\boldsymbol{X}$. The regularization parameter $\tau$ is adopted to control the balance between the reconstruction performance and the sparsity. We will show how to get the regularization parameter $\tau$ in the following sections. The optimization problem in (\ref{op}) is a special case in~\cite{16M1071730}, where a generalization of SDP over infinite dictionary is investigated.

Before solving the optimization problem (\ref{op}), we first introduce the dual norm~\cite{7313018} for the proposed atomic norm. We define the dual norm of atomic norm as
\begin{align}
\|\boldsymbol{U}\|^*_{\tilde{\mathcal{A}}}\triangleq \sup_{\|\boldsymbol{X}\|_{\tilde{\mathcal{A}}}\leq 1} \langle \boldsymbol{X},\boldsymbol{U}\rangle, 
\end{align}
where atomic norm is given in (\ref{atomic}). 

Based on the dual norm,  the dual problem of (\ref{op}) can be obtained from the following proposition
\begin{theorem}
For an optimization problem $\min_{\boldsymbol{X}} \frac{1}{2} \|\boldsymbol{Y}-\boldsymbol{X}\|^2_{F}+\tau\|\boldsymbol{X}\|_{\tilde{\mathcal{A}}}$, where $\boldsymbol{Y} \in\mathbb{C}^{N\times P}$, $\boldsymbol{X}\in \mathbb{C}^{N\times P}$ and $\tau\geq 0$, the dual problem is 
\begin{align}\label{dual13}
\min_{\boldsymbol{U}} \quad & \|\boldsymbol{Y}-\boldsymbol{U}\|^2_{F}\\
\text{s.t.}\quad & \|\boldsymbol{U}\|^*_{\tilde{\mathcal{A}}}\leq \tau,  \notag
\end{align}
where $\|\boldsymbol{U}\|^*_{\tilde{\mathcal{A}}}$ denotes the dual norm of $\|\boldsymbol{U}\|_{\tilde{\mathcal{A}}}$.
\end{theorem}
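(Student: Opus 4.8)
The plan is to derive (\ref{dual13}) by Lagrangian/minimax duality, exploiting the defining relationship between $\|\cdot\|_{\tilde{\mathcal{A}}}$ and its dual norm $\|\cdot\|^*_{\tilde{\mathcal{A}}}$ (an equivalent route through Fenchel--Rockafellar conjugation is available, but the saddle-point formulation is the most transparent). First I would invoke the biduality identity $\|\boldsymbol{X}\|_{\tilde{\mathcal{A}}}=\sup_{\|\boldsymbol{U}\|^*_{\tilde{\mathcal{A}}}\le 1}\langle\boldsymbol{X},\boldsymbol{U}\rangle$, which holds whenever $\|\cdot\|_{\tilde{\mathcal{A}}}$ is the gauge of a closed convex balanced set, and combine it with positive homogeneity to write $\tau\|\boldsymbol{X}\|_{\tilde{\mathcal{A}}}=\sup_{\|\boldsymbol{U}\|^*_{\tilde{\mathcal{A}}}\le\tau}\langle\boldsymbol{X},\boldsymbol{U}\rangle$. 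Substituting this into the primal objective recasts the problem as the saddle-point program
$$\min_{\boldsymbol{X}}\ \sup_{\|\boldsymbol{U}\|^*_{\tilde{\mathcal{A}}}\le\tau}\ \Big[\tfrac{1}{2}\|\boldsymbol{Y}-\boldsymbol{X}\|_F^2+\langle\boldsymbol{X},\boldsymbol{U}\rangle\Big].$$

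Next I would exchange the minimization and the maximization. The bracketed functional is strongly convex in $\boldsymbol{X}$ and affine (hence concave) in $\boldsymbol{U}$, and the dual-norm ball $\{\boldsymbol{U}:\|\boldsymbol{U}\|^*_{\tilde{\mathcal{A}}}\le\tau\}$ is convex and compact in the finite-dimensional space $\mathbb{C}^{N\times P}$, so Sion's minimax theorem licenses the swap with no duality gap. The resulting inner problem over $\boldsymbol{X}$ is an unconstrained quadratic whose stationarity condition gives the closed-form minimizer $\boldsymbol{X}=\boldsymbol{Y}-\boldsymbol{U}$.

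Finally I would substitute $\boldsymbol{X}=\boldsymbol{Y}-\boldsymbol{U}$ back to reduce the saddle program to $\max_{\|\boldsymbol{U}\|^*_{\tilde{\mathcal{A}}}\le\tau}\big[\langle\boldsymbol{Y},\boldsymbol{U}\rangle-\tfrac{1}{2}\|\boldsymbol{U}\|_F^2\big]$ and complete the square, using $\langle\boldsymbol{Y},\boldsymbol{U}\rangle-\tfrac{1}{2}\|\boldsymbol{U}\|_F^2=\tfrac{1}{2}\|\boldsymbol{Y}\|_F^2-\tfrac{1}{2}\|\boldsymbol{Y}-\boldsymbol{U}\|_F^2$. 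Since $\tfrac{1}{2}\|\boldsymbol{Y}\|_F^2$ is a constant and a positive rescaling does not change the minimizer, maximizing the right-hand side over the dual ball is equivalent to minimizing $\|\boldsymbol{Y}-\boldsymbol{U}\|_F^2$ subject to $\|\boldsymbol{U}\|^*_{\tilde{\mathcal{A}}}\le\tau$, which is precisely (\ref{dual13}).

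The hard part will be justifying the two structural facts on which the chain rests, namely the biduality $\|\cdot\|^{**}_{\tilde{\mathcal{A}}}=\|\cdot\|_{\tilde{\mathcal{A}}}$ and the compactness needed for the minimax swap. Both presuppose that the infimum in (\ref{atomic}) really defines a closed, convex, positively homogeneous gauge with a bounded dual ball; this is non-obvious here because the gain-phase factor $(\operatorname{diag}\{\boldsymbol{e}\}+\boldsymbol{I}_N)$ and its constraint $\|\boldsymbol{e}\|_2\le C_{\text{e}}$ couple the atoms in a way absent from the classical ANM, so I would first verify convexity and closedness of $\operatorname{conv}(\tilde{\mathcal{A}})$ (equivalently, that $\|\cdot\|_{\tilde{\mathcal{A}}}$ satisfies the triangle inequality and is bounded below by a multiple of $\|\cdot\|_F$). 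The remaining manipulations --- carrying the complex inner product through its real part $\mathcal{R}\{\cdot\}$ and the quadratic algebra --- are routine.
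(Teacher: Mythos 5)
Your proposal is correct, but it takes a genuinely different route from the paper. The paper's proof is a direct Lagrangian computation: it splits the variable via a constraint $\boldsymbol{Z}=\boldsymbol{X}$, forms $L(\boldsymbol{X},\boldsymbol{Z},\boldsymbol{U})=\frac{1}{2}\|\boldsymbol{Y}-\boldsymbol{Z}\|_F^2+\tau\|\boldsymbol{X}\|_{\tilde{\mathcal{A}}}+\langle\boldsymbol{Z}-\boldsymbol{X},\boldsymbol{U}\rangle$, and evaluates the dual function in two pieces: minimizing over $\boldsymbol{Z}$ yields $-\frac{1}{2}\|\boldsymbol{Y}-\boldsymbol{U}\|_F^2+\frac{1}{2}\|\boldsymbol{Y}\|_F^2$, while the conjugate term $\max_{\boldsymbol{X}}\left\{\langle\boldsymbol{X},\boldsymbol{U}\rangle-\tau\|\boldsymbol{X}\|_{\tilde{\mathcal{A}}}\right\}$ collapses to the indicator of the ball $\{\boldsymbol{U}:\|\boldsymbol{U}\|^*_{\tilde{\mathcal{A}}}\leq\tau\}$. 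That collapse needs only the definition of the dual norm and positive homogeneity of the gauge, so the paper never has to invoke biduality, closedness of $\operatorname{conv}(\tilde{\mathcal{A}})$, or compactness of the dual ball. Your route---rewriting $\tau\|\boldsymbol{X}\|_{\tilde{\mathcal{A}}}$ through the bidual, swapping $\min$ and $\sup$ by Sion's theorem, and eliminating $\boldsymbol{X}=\boldsymbol{Y}-\boldsymbol{U}$---proves something stronger (zero duality gap and the primal--dual map $\boldsymbol{X}=\boldsymbol{Y}-\boldsymbol{U}$), but at the price of exactly the structural facts you flag as the hard part. Those facts do hold here: the atoms are the continuous image of a compact parameter set, so $\operatorname{conv}(\tilde{\mathcal{A}})$ is compact and hence closed; the hull is balanced under complex phase rotations of $\boldsymbol{d}_k$; and the atoms with $\boldsymbol{e}=\boldsymbol{0}$ already span $\mathbb{C}^{N\times P}$ (a Vandermonde argument on the steering vectors), so the gauge is a finite norm and, by equivalence of norms in finite dimension, the dual ball is compact. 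One correction to your verification plan: compactness of the dual ball corresponds to the atomic norm being bounded \emph{above} by a multiple of $\|\cdot\|_F$ (so that the unit ball of $\|\cdot\|_{\tilde{\mathcal{A}}}$ contains a Frobenius ball), whereas a lower bound of the form $c\|\cdot\|_F\leq\|\cdot\|_{\tilde{\mathcal{A}}}$ is what keeps the dual norm finite; since both follow once the gauge is a finite norm on a finite-dimensional space, your argument still closes.
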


\begin{proof}
Using a Lagrange multiplier $\boldsymbol{U}$, we first formulate the following Lagrange function of the optimization problem (\ref{op}) as
\begin{align}
L(\boldsymbol{X},\boldsymbol{Z},\boldsymbol{U})\triangleq \frac{1}{2}\|\boldsymbol{Y}-\boldsymbol{Z}\|^2_F+\tau\|\boldsymbol{X}\|_{\tilde{\mathcal{A}}}+\langle \boldsymbol{Z}-\boldsymbol{X},\boldsymbol{U}\rangle, 
\end{align}
where the inner product between matrices is defined as $\langle \boldsymbol{X},\boldsymbol{Y} \rangle\triangleq \mathcal{R}\{\operatorname{Tr}(\boldsymbol{Y}^{\text{H}}\boldsymbol{X})\}$. Using the Lagrange function, the dual problem of (\ref{op}) is given as~\cite{boyd2004convex} 
\begin{align}\label{dual}
\max_{\boldsymbol{U}}\min_{\boldsymbol{X},\boldsymbol{Z}} L(\boldsymbol{X},\boldsymbol{Z},\boldsymbol{U}) = \max_{\boldsymbol{U}}\left\{L_1(\boldsymbol{U}) -L_2(\boldsymbol{U})\right\},
\end{align}
where $L_1(\boldsymbol{U})\triangleq \min_{\boldsymbol{Z}} \frac{1}{2}\|\boldsymbol{Y}-\boldsymbol{Z}\|^2_F+\langle \boldsymbol{Z},\boldsymbol{U}\rangle$, and $L_2(\boldsymbol{U})\triangleq \max_{\boldsymbol{X}}\left\{\langle \boldsymbol{X},\boldsymbol{U}\rangle-\tau\|\boldsymbol{X}\|_{\tilde{\mathcal{A}}}\right\}$.

Then, with the definition of dual norm, $L_2(\boldsymbol{U})$ can be simplified as 
\begin{align}
L_2(\boldsymbol{U}) 
& = \tau \max_{\boldsymbol{X}}\left\{\left\langle \boldsymbol{X},\frac{1}{\tau}\boldsymbol{U}\right\rangle-\|\boldsymbol{X}\|_{\tilde{\mathcal{A}}}\right\}\notag\\
& = \tau I\left(\left\|\boldsymbol{U}\right\|^*_{\tilde{\mathcal{A}}}\leq \tau\right),
\end{align}
where the indicate function is defined as
\begin{align}
I\left(\left\|\boldsymbol{U}\right\|^*_{\tilde{\mathcal{A}}}\leq \tau\right)=\begin{cases}
0, & \left\|\boldsymbol{U}\right\|^*_{\tilde{\mathcal{A}}}\leq \tau\\
\infty,&\text{otherwise}
\end{cases}.
\end{align}
Additionally, $L_1(\boldsymbol{U})$ can be obtained as
\begin{align}
L_1(\boldsymbol{U}) = -\frac{1}{2}\|\boldsymbol{Y}-\boldsymbol{U}\|_F^2+\frac{1}{2}\|\boldsymbol{Y}\|^2_F.
\end{align}

Therefore, the dual problem in (\ref{dual}) can be simplified as (\ref{dual13}).	
\end{proof}

With the dual problem, we obtain a SDP problem to solve  (\ref{dual}) efficiently. The SDP problem is given as the following proposition.
\begin{theorem}\label{prop2} 	
	With the gain-phase error vector $\boldsymbol{e}$, the feasible set of the dual problem (\ref{dual13}) for DOA estimation with multiple measurements 
	is included by the following SDP problem, so  the dual problem (\ref{dual13})	can be  relaxed and simplified as 
	\begin{align}\label{sdp}
	\min_{\substack{
		\boldsymbol{U}\in\mathbb{C}^{N\times P}\\ \boldsymbol{Q}\in\mathbb{C}^{N\times N}}} \quad & \|\boldsymbol{Y}-\boldsymbol{U}\|_{F}\\
	\text{s.t.}\quad & \begin{bmatrix}
	\boldsymbol{Q} & \boldsymbol{U}\\
	\boldsymbol{U}^{\text{H}}& \tau^2 \boldsymbol{I}_P
	\end{bmatrix}\succeq 0\notag \\
	& \sum_n Q_{n,n+k}= 0\ (k\neq 0) \notag\\
		& \operatorname{Tr}(\boldsymbol{Q}) +(C_{\text{e}}+2\sqrt{N})C_{\text{e}} \left\|\boldsymbol{Q}\right\|_2 -1\leq 0  \notag\\
	&\boldsymbol{Q}\text{ is Hermitian},\notag  
	\end{align}  
	where the $\ell_2$ norm of a matrix $\|\boldsymbol{Q}\|_2$ is the largest singular value of $\boldsymbol{Q}$.
\end{theorem}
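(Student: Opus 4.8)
The plan is to turn the dual-norm constraint $\|\boldsymbol{U}\|^*_{\tilde{\mathcal{A}}}\leq\tau$ into a worst-case pointwise bound over the steering angle and the error, and then to over-bound that condition by the listed semidefinite constraints through a Schur-complement argument combined with a trigonometric-polynomial identity. First I would evaluate the dual norm directly from its definition. Since the atoms are $(\operatorname{diag}\{\boldsymbol{e}\}+\boldsymbol{I}_N)\boldsymbol{a}(\theta)\boldsymbol{d}^{\text{T}}$ with $\|\boldsymbol{e}\|_2\leq C_{\text{e}}$ and $\|\boldsymbol{d}\|_2=1$, the inner product $\langle(\operatorname{diag}\{\boldsymbol{e}\}+\boldsymbol{I}_N)\boldsymbol{a}(\theta)\boldsymbol{d}^{\text{T}},\boldsymbol{U}\rangle$ reduces to $\mathcal{R}\{\boldsymbol{d}^{\text{T}}\boldsymbol{U}^{\text{H}}(\operatorname{diag}\{\boldsymbol{e}\}+\boldsymbol{I}_N)\boldsymbol{a}(\theta)\}$, and maximizing over the unit vector $\boldsymbol{d}$ (Cauchy--Schwarz, attained by the conjugate direction) collapses this to the $\ell_2$ norm of the inner vector, giving
\begin{align}
\|\boldsymbol{U}\|^*_{\tilde{\mathcal{A}}}=\sup_{\theta,\,\|\boldsymbol{e}\|_2\leq C_{\text{e}}}\big\|\boldsymbol{U}^{\text{H}}(\operatorname{diag}\{\boldsymbol{e}\}+\boldsymbol{I}_N)\boldsymbol{a}(\theta)\big\|_2 .
\end{align}
The constraint $\|\boldsymbol{U}\|^*_{\tilde{\mathcal{A}}}\leq\tau$ is therefore a uniform bound over both the angle $\theta$ and the error $\boldsymbol{e}$.

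Next I would introduce the Hermitian certificate $\boldsymbol{Q}$ and apply the Schur complement to the block constraint in (\ref{sdp}), which for $\tau>0$ is equivalent to $\boldsymbol{Q}\succeq\tau^{-2}\boldsymbol{U}\boldsymbol{U}^{\text{H}}$; this yields $\|\boldsymbol{U}^{\text{H}}\boldsymbol{v}\|_2^2\leq\tau^2\,\boldsymbol{v}^{\text{H}}\boldsymbol{Q}\boldsymbol{v}$ for every vector $\boldsymbol{v}$. It then suffices to prove $\boldsymbol{v}^{\text{H}}\boldsymbol{Q}\boldsymbol{v}\leq1$ for every admissible $\boldsymbol{v}=(\operatorname{diag}\{\boldsymbol{e}\}+\boldsymbol{I}_N)\boldsymbol{a}(\theta)$. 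At this point the off-diagonal constraints $\sum_n Q_{n,n+k}=0$ $(k\neq0)$ enter: testing $\boldsymbol{Q}$ against the steering vector gives $\boldsymbol{a}(\theta)^{\text{H}}\boldsymbol{Q}\boldsymbol{a}(\theta)=\operatorname{Tr}(\boldsymbol{Q})$ independently of $\theta$, because the diagonal sums are precisely the Fourier coefficients of the trigonometric polynomial $\boldsymbol{a}(\theta)^{\text{H}}\boldsymbol{Q}\boldsymbol{a}(\theta)$ and all but the constant one are forced to vanish.

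Finally I would split $\boldsymbol{v}=\boldsymbol{a}(\theta)+\operatorname{diag}\{\boldsymbol{e}\}\boldsymbol{a}(\theta)$ and bound the perturbation. Using $\|\boldsymbol{a}(\theta)\|_2=\sqrt{N}$, $\|\operatorname{diag}\{\boldsymbol{e}\}\boldsymbol{a}(\theta)\|_2=\|\boldsymbol{e}\|_2\leq C_{\text{e}}$ (each entry of $\boldsymbol{a}(\theta)$ has unit modulus), and $\|\boldsymbol{Q}\|_2$ as the largest singular value, the two cross terms contribute at most $2\sqrt{N}C_{\text{e}}\|\boldsymbol{Q}\|_2$ and the quadratic term at most $C_{\text{e}}^2\|\boldsymbol{Q}\|_2$, so that
\begin{align}
\boldsymbol{v}^{\text{H}}\boldsymbol{Q}\boldsymbol{v}\leq\operatorname{Tr}(\boldsymbol{Q})+(C_{\text{e}}+2\sqrt{N})C_{\text{e}}\|\boldsymbol{Q}\|_2 .
\end{align}
The scalar constraint then forces $\boldsymbol{v}^{\text{H}}\boldsymbol{Q}\boldsymbol{v}\leq1$, hence $\|\boldsymbol{U}^{\text{H}}\boldsymbol{v}\|_2\leq\tau$ uniformly, i.e. $\|\boldsymbol{U}\|^*_{\tilde{\mathcal{A}}}\leq\tau$. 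Thus every $(\boldsymbol{U},\boldsymbol{Q})$ feasible for the SDP produces a $\boldsymbol{U}$ feasible for (\ref{dual13}), which is the claimed inclusion, and substituting the SDP constraints for the dual-norm constraint yields problem (\ref{sdp}).

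The step I expect to be the main obstacle is the perturbation bound. The joint worst case over $\boldsymbol{e}$ is not exactly semidefinite-representable, since the error enters multiplicatively inside the steering vector and the supremum over $(\theta,\boldsymbol{e})$ cannot be captured by a single Toeplitz-trace condition. I therefore anticipate replacing the exact worst case by triangle-inequality and operator-norm estimates, which makes the SDP a conservative inner approximation of the dual feasible set rather than an exact reformulation; the slack of this relaxation is exactly the extra term $(C_{\text{e}}+2\sqrt{N})C_{\text{e}}\|\boldsymbol{Q}\|_2$, and checking that this is the tightest bound obtainable from the elementary norm inequalities is the delicate part of the argument.
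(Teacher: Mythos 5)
Your proposal is correct and follows essentially the same route as the paper's own proof: evaluate the dual norm by maximizing over $\boldsymbol{d}$ via Cauchy--Schwarz, use the Schur complement of the block constraint to get $\|\boldsymbol{U}^{\text{H}}\boldsymbol{v}\|_2^2\leq\tau^2\boldsymbol{v}^{\text{H}}\boldsymbol{Q}\boldsymbol{v}$, bound the error-dependent terms by $(C_{\text{e}}+2\sqrt{N})C_{\text{e}}\|\boldsymbol{Q}\|_2$ with the same triangle-inequality and operator-norm estimates, and invoke the Toeplitz-sum and trace constraints to conclude $\|\boldsymbol{U}\|^*_{\tilde{\mathcal{A}}}\leq\tau$, so the SDP is a sufficient (conservative) restriction of the dual feasible set. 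Your explicit remark that $\sum_n Q_{n,n+k}=0$ $(k\neq 0)$ forces $\boldsymbol{a}^{\text{H}}(\theta)\boldsymbol{Q}\boldsymbol{a}(\theta)=\operatorname{Tr}(\boldsymbol{Q})$ for all $\theta$ is left implicit in the paper but is exactly the step it relies on.
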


To show that the optimization problem (\ref{sdp}) is a type of SDP problem, we give the proof in Appendix~\ref{sdpAP}. The proof for Proposition~\ref{prop2} is given in Appendix~\ref{ap2}. Note that the constraints in the SDP problem (\ref{sdp}) are  sufficient to the dual norm constraint in (\ref{dual13}),  so the denoised result $\boldsymbol{U}$ in (\ref{sdp}) is only a sufficient approximation of the optimal results in (\ref{dual13}).

To estimate the DOA from the solution of (\ref{sdp}), we can formulate a quadratic $\|\hat{\boldsymbol{U}}\boldsymbol{a}(\theta)\|^2_2$, which is also a polynomial of $\boldsymbol{a}(\theta)$. Inspired by~\cite{chi_harnessing_2019}, we can estimate the DOA by searching the peak of $\|\hat{\boldsymbol{U}}\boldsymbol{a}(\theta)\|^2_2$, so we get a straightforward corollary.
\begin{cor}	
	The DOA polynomial is formulated as
	\begin{align}
	\left\|\hat{\boldsymbol{U}}\boldsymbol{a}(\theta)\right\|^2_2\leq \tau^2\left(1-(C_{\text{e}}+2\sqrt{N})C_{\text{e}} \|\hat{\boldsymbol{Q}}\|_2\right),
	\end{align}
	where $\hat{\boldsymbol{U}}$ and $\hat{\boldsymbol{Q}}$ are the solutions of the SDP problem in (\ref{sdp}). Additionally, the quantity $1-(C_{\text{e}}+2\sqrt{N})C_{\text{e}} \|\hat{\boldsymbol{Q}}\|_2$ is positive.
\end{cor}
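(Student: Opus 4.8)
The plan is to extract two independent consequences of the feasibility constraints in the SDP~(\ref{sdp}) and chain them together. First I would apply the Schur complement to the positive-semidefinite block constraint. Since $\tau^2\boldsymbol{I}_P\succ 0$ for $\tau>0$, the block matrix being PSD is equivalent to $\hat{\boldsymbol{Q}}-\frac{1}{\tau^2}\hat{\boldsymbol{U}}\hat{\boldsymbol{U}}^{\text{H}}\succeq 0$, i.e. $\hat{\boldsymbol{Q}}\succeq \frac{1}{\tau^2}\hat{\boldsymbol{U}}\hat{\boldsymbol{U}}^{\text{H}}$. Testing this matrix inequality against the steering vector $\boldsymbol{a}(\theta)$ then yields $\boldsymbol{a}(\theta)^{\text{H}}\hat{\boldsymbol{Q}}\boldsymbol{a}(\theta)\geq \frac{1}{\tau^2}\|\hat{\boldsymbol{U}}^{\text{H}}\boldsymbol{a}(\theta)\|_2^2$, whose right-hand side is exactly the DOA polynomial of the statement (read with the Hermitian transpose, as dimensional consistency requires).

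The decisive step, and the one I regard as the real obstacle, is to show that the left-hand quadratic form collapses to the trace, $\boldsymbol{a}(\theta)^{\text{H}}\hat{\boldsymbol{Q}}\boldsymbol{a}(\theta)=\operatorname{Tr}(\hat{\boldsymbol{Q}})$. Writing $[\boldsymbol{a}(\theta)]_n=e^{jn\omega}$ with $\omega\triangleq \xi\sin\theta$, I would expand the form as $\sum_{m,n}\hat{Q}_{m,n}e^{j(n-m)\omega}$ and regroup it by diagonal index $k=n-m$ into $\sum_k e^{jk\omega}\big(\sum_n \hat{Q}_{n,n+k}\big)$. The subtle point worth flagging is that the diagonal-sum constraint $\sum_n Q_{n,n+k}=0$ for $k\neq 0$ is strictly weaker than requiring $\hat{\boldsymbol{Q}}$ to be Toeplitz, yet it is \emph{exactly} strong enough here: because the unit-modulus, Vandermonde structure of $\boldsymbol{a}(\theta)$ probes $\hat{\boldsymbol{Q}}$ only through its summed diagonals (each diagonal becomes one Fourier coefficient $e^{jk\omega}$), annihilating every diagonal sum kills every off-diagonal contribution and leaves only the $k=0$ term $\sum_n\hat{Q}_{n,n}=\operatorname{Tr}(\hat{\boldsymbol{Q}})$. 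Combining this identity with the Schur-complement inequality and then the trace constraint $\operatorname{Tr}(\hat{\boldsymbol{Q}})\leq 1-(C_{\text{e}}+2\sqrt{N})C_{\text{e}}\|\hat{\boldsymbol{Q}}\|_2$ produces the claimed bound $\|\hat{\boldsymbol{U}}^{\text{H}}\boldsymbol{a}(\theta)\|_2^2\leq \tau^2\big(1-(C_{\text{e}}+2\sqrt{N})C_{\text{e}}\|\hat{\boldsymbol{Q}}\|_2\big)$.

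For the positivity assertion I would reuse the fact that the Schur complement already forces $\hat{\boldsymbol{Q}}\succeq 0$. For a PSD matrix the trace dominates the spectral norm, $\operatorname{Tr}(\hat{\boldsymbol{Q}})\geq \|\hat{\boldsymbol{Q}}\|_2\geq 0$, so the trace constraint gives $1-(C_{\text{e}}+2\sqrt{N})C_{\text{e}}\|\hat{\boldsymbol{Q}}\|_2\geq \operatorname{Tr}(\hat{\boldsymbol{Q}})\geq 0$. Hence the quantity is nonnegative, and it is strictly positive in every case: if $\hat{\boldsymbol{Q}}=0$ it equals $1$, while if $\hat{\boldsymbol{Q}}\neq 0$ it is bounded below by $\|\hat{\boldsymbol{Q}}\|_2>0$. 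The remaining work is entirely routine, so I expect the entire difficulty to be concentrated in recognizing the diagonal-sum/Fourier-coefficient cancellation of the second paragraph.
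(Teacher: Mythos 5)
Your proof is correct and follows essentially the same route as the paper's: the Schur-complement consequence $\|\hat{\boldsymbol{U}}^{\text{H}}\boldsymbol{a}(\theta)\|_2^2\leq\tau^2\,\boldsymbol{a}^{\text{H}}(\theta)\hat{\boldsymbol{Q}}\boldsymbol{a}(\theta)$, followed by collapsing the quadratic form to $\operatorname{Tr}(\hat{\boldsymbol{Q}})$ via the diagonal-sum constraints and then applying the trace constraint. The only difference is that you spell out two steps the paper leaves implicit, namely the Fourier-coefficient cancellation behind the identity $\boldsymbol{a}^{\text{H}}(\theta)\hat{\boldsymbol{Q}}\boldsymbol{a}(\theta)=\operatorname{Tr}(\hat{\boldsymbol{Q}})$ and the positivity argument (trace dominates spectral norm for a PSD matrix), which the paper asserts without proof.
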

\begin{proof}
	From (\ref{schur}), we can find that for any $\theta$, we have
	\begin{align}
	\|\hat{\boldsymbol{U}}\boldsymbol{a}(\theta)\|_2^2\leq \tau^2 \boldsymbol{a}^{\text{H}}(\theta)\hat{\boldsymbol{Q}}\boldsymbol{a}(\theta).
	\end{align}
	Therefore, from the construction of $\hat{\boldsymbol{Q}}$ in (\ref{Q}), we obtain
		\begin{align}
	\|\hat{\boldsymbol{U}}\boldsymbol{a}(\theta)\|_2^2\leq \tau^2\left(1-(C_{\text{e}}+2\sqrt{N})C_{\text{e}} \|\hat{\boldsymbol{Q}}\|_2\right).
	\end{align} 
\end{proof}
Then, the DOA can be obtained by searching the peak values of $ 	\left\|\hat{\boldsymbol{U}}\boldsymbol{a}(\theta)\right\|^2_2$, which is closed to  $\tau^2\left(1-(C_{\text{e}}+2\sqrt{N})C_{\text{e}} \|\hat{\boldsymbol{Q}}\|_2\right)$. The estimated DOAs are denoted as $\hat{\theta}_k$ $(k=0,1,\dots,K-1)$.

To estimate the other unknown parameters including $\boldsymbol{e}$, $b_k$, $\boldsymbol{d}_k$, we formulate the following optimization problem
\begin{align}\label{ls}
\min_{\boldsymbol{e},\boldsymbol{b}_k,\boldsymbol{d}_k}\quad & \|\boldsymbol{Y}-\sum_{k} b_k(\operatorname{diag}\{\boldsymbol{e}\}+\boldsymbol{I}) \boldsymbol{a}(\hat{\theta}_k)\boldsymbol{d}_k^{\text{T}}\|_F\\
\text{s.t.}\quad &
\|\boldsymbol{e}\|_2 = C_{\text{e}}, \|\boldsymbol{d}_k\|_2 \leq 1,b_k\geq 0.  \notag
\end{align}
Since the upper bound of the gain-phase errors $\boldsymbol{e}$ is used in the proposed GP-ANM method to estimate the DOA, the constraint $\|\boldsymbol{e}\|_2 = C_{\text{e}}$ is used for the estimation of unknown parameters. For the gain-phase errors $\boldsymbol{e}$, we can formulate the following optimization problem 
\begin{align}
	\min_{\boldsymbol{e}}\quad & f(\boldsymbol{e})\triangleq \|\boldsymbol{Y}-\sum_{k} b_k(\operatorname{diag}\{\boldsymbol{e}\}+\boldsymbol{I}) \boldsymbol{a}(\hat{\theta}_k)\boldsymbol{d}_k^{\text{T}}\|^2_F\\
\text{s.t.}\quad &
\|\boldsymbol{e}\|_2 = C_{\text{e}}.  \notag
\end{align}
$f(\boldsymbol{e})$ can be rewritten as
\begin{align}
	f(\boldsymbol{e}) & = \|\boldsymbol{Y}-(\operatorname{diag}\{\boldsymbol{e}\}+\boldsymbol{I}) \underbrace{\sum_{k} b_k \boldsymbol{a}(\hat{\theta}_k)\boldsymbol{d}_k^{\text{T}}}_{\boldsymbol{H}}\|_F^2\\
	& =  \sum_{n=0}^{N-1}\left\|\bar{\boldsymbol{y}}_{n}-(e_n+1)\bar{\boldsymbol{h}}_n
	\right\|^2_2,\notag
\end{align}
where $\bar{\boldsymbol{y}}_n$ denotes the $n$-th row of $\boldsymbol{Y}$ and $\bar{\boldsymbol{h}}_{n}$ is the $n$-th row of $\boldsymbol{H}$.
Therefore, the Lagrange function for $\boldsymbol{e}$ with the Lagrange parameter $\lambda_{\text{e}}\geq 0$ can be obtained as 
\begin{align}
	L(\boldsymbol{e}) =f(\boldsymbol{e})+\lambda_{\text{e}}(\|\boldsymbol{e}\|^2_2-C^2_{\text{e}}).
\end{align}
With $\frac{\partial L(\boldsymbol{e}) }{\partial \boldsymbol{e}^*} = 0$, we can obtain the estimated gain-phase as
\begin{align}\label{eq27}
	\hat{e}_n = \frac{\bar{\boldsymbol{h}}_n^{\text{H}}(\bar{\boldsymbol{y}}_n-\bar{\boldsymbol{h}}_n)}{\lambda_{\text{e}}+\bar{\boldsymbol{h}}_n^{\text{H}}\bar{\boldsymbol{h}}_n},
\end{align}
where we choose $\lambda_{\text{e}}$ to ensure that $\|\boldsymbol{e}\|_2=C_{\text{e}}$.

For the unknown parameter $b_k$ and $\boldsymbol{d}_k$, we can formulate $\boldsymbol{d}'_k=\boldsymbol{d}_kb_k$. With the estimated $\hat{\theta}_k$ and $\hat{\boldsymbol{e}}$, we have
\begin{align}
\min_{\boldsymbol{d}'_k}\quad & f(\boldsymbol{d}'_k)\triangleq \|\boldsymbol{Y}-\sum_{k} (\operatorname{diag}\{\hat{\boldsymbol{e}}\}+\boldsymbol{I}) \boldsymbol{a}(\hat{\theta}_k)\boldsymbol{d}_k^{'\text{T}}\|^2_F.
\end{align}
Then, we define $\boldsymbol{D}'=[\boldsymbol{d}'_0,\boldsymbol{d}'_1,\dots, \boldsymbol{d}'_{K-1}]$ and $\hat{\boldsymbol{A}}=[\boldsymbol{a}(\hat{\theta}_0),\boldsymbol{a}(\hat{\theta}_1),\dots,\boldsymbol{a}(\hat{\theta}_{K-1})]$, and $\boldsymbol{D}'$ can be estimated as
\begin{align}\label{eq29}
\hat{\boldsymbol{D}}' = \left[\hat{\boldsymbol{A}}^{\dagger}(\operatorname{diag}\{\hat{\boldsymbol{e}}\}+\boldsymbol{I})^{-1}\boldsymbol{Y}\right]^{\text{T}},
\end{align}
where $\dagger$ denotes the pseudo-inverse operation.
Then,  $\boldsymbol{d}_k$ can be estimated from the normalized $\boldsymbol{d}'_k$ and $b_k$ is the normalization coefficient. By alternatively estimating the unknown parameters$\boldsymbol{e}$, $b_k$, $\boldsymbol{d}_k$. We can finally estimated all the unknown parameters.

The details of the proposed method for the DOA estimation is given in Algorithm~\ref{alg1}. The computational complexity of the proposed method is almost the same with the traditional atomic norm minimization (ANM) method. Only a $\ell_2$ norm for the matrix $\boldsymbol{Q}$ is added in the SDP problem, so the computational complexity is $\mathcal{O}(N^3)$ more than the ANM method at each iteration.

\begin{algorithm}
	\caption{DOA Estimation Using GP-ANM} \label{alg1}
	\begin{algorithmic}[1]
		\STATE  \emph{Input:} received signal $\boldsymbol{Y}$, noise variance $\sigma^2_{\text{n}}$, the number of antennas $N$, and the number of measurements (snapshots) $P$.
		\STATE \emph{Initialization:} $\tau =\eta \sigma_{\text{n}}\sqrt{4NP\ln (N)}$.
		 \STATE Formulate the SDP problem as (\ref{sdp}), and obtain the matrix $\hat{\boldsymbol{U}}$.
		 \STATE Get the polynomial $f(\hat{\boldsymbol{U}})=\left\|\hat{\boldsymbol{U}}\boldsymbol{a}(\theta)\right\|^2_2$.
		 \STATE Use the peak searching of $f(\hat{\boldsymbol{U}})$,  and get the estimated DOA $\hat{\boldsymbol{\theta}}$.
		 \STATE The other unknown parameters can be obtained by the alternative estimations in the problem (\ref{ls}).
		\STATE \emph{Output:}  the estimated DOA $\hat{\boldsymbol{\theta}}$.
	\end{algorithmic}
\end{algorithm}

\section{The regularization parameter $\tau$}\label{sec4}
In (\ref{op}), the regularization parameter is important and has a great effect on reconstructing the sparse signal, so we will obtain the  regularization parameter in this section. 

Usually, the regularization parameter $\tau$ can be chosen as~\cite{DBLP:journals/corr/LiYTW17} 
\begin{align}
\tau\approx \eta \mathcal{E}\left\{\|\boldsymbol{N}\|^*_{\tilde{\mathcal{A}}}\right\}\quad (\eta \geq 1).
\end{align}
To get $\mathcal{E}\left\{\|\boldsymbol{N}\|^*_{\tilde{\mathcal{A}}}\right\}$, we can obtain the following proposition to determine the regularization
\begin{theorem}\label{prop4}
The entries in $\boldsymbol{N}\in\mathbb{C}^{N\times P}$ follow the  zero-mean Gaussian distribution with the variance being $\sigma_{\text{N}}^2$ and the entries are independent. With the probability more than $1-2e^{-t^2/2}$, the upper bound of $\left\{\|\boldsymbol{N}\|^*_{\tilde{\mathcal{A}}}\right\}$ can be obtained as
\begin{align}
\mathcal{E}\left\{\|\boldsymbol{N}\|^*_{\mathcal{A}}\right\}\leq \min\left\{\text{bd}_1,\text{bd}_2\right\}C_{\text{e}} +
\sigma_{\text{N}}\sqrt{4NP\ln N},
\end{align}
where  the definition of dual atomic norm is defined in (\ref{17}), $ \text{bd}_1\triangleq \sqrt{2}\sigma_{\text{N}}\frac{\Gamma((NP+1)/2)}{\Gamma(NP/2)}$, and $\text{bd}_2\triangleq \left(\sqrt{N}+\sqrt{P}+t\right)\sigma_{\text{N}}$.
\end{theorem}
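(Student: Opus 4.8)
The plan is to reduce the dual norm $\|\boldsymbol{N}\|^*_{\tilde{\mathcal{A}}}$ to a supremum of an explicit Gaussian quantity, and then to split that quantity into a ``steering'' part and a ``gain--phase error'' part, each of which I bound separately. First I would evaluate the support function of the atomic set: since $\langle\boldsymbol{X},\boldsymbol{N}\rangle=\mathcal{R}\{\operatorname{Tr}(\boldsymbol{N}^{\text{H}}\boldsymbol{X})\}$ and every atom has the form $(\operatorname{diag}\{\boldsymbol{e}\}+\boldsymbol{I}_N)\boldsymbol{a}(\theta)\boldsymbol{d}^{\text{T}}$, maximizing over the unit vector $\boldsymbol{d}$ first collapses the inner product to an $\ell_2$ norm, giving
\[
\|\boldsymbol{N}\|^*_{\tilde{\mathcal{A}}}=\sup_{\theta}\ \sup_{\|\boldsymbol{e}\|_2\le C_{\text{e}}}\ \bigl\|\boldsymbol{N}^{\text{H}}(\operatorname{diag}\{\boldsymbol{e}\}+\boldsymbol{I}_N)\boldsymbol{a}(\theta)\bigr\|_2 .
\]

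Second, I would use the identity $\operatorname{diag}\{\boldsymbol{e}\}\boldsymbol{a}(\theta)=\operatorname{diag}\{\boldsymbol{a}(\theta)\}\boldsymbol{e}$ together with the triangle inequality to separate this into $\|\boldsymbol{N}^{\text{H}}\boldsymbol{a}(\theta)\|_2+\|\boldsymbol{N}^{\text{H}}\operatorname{diag}\{\boldsymbol{a}(\theta)\}\boldsymbol{e}\|_2$, where the second term is bounded by $C_{\text{e}}\,\|\boldsymbol{N}^{\text{H}}\operatorname{diag}\{\boldsymbol{a}(\theta)\}\|_2$ over $\|\boldsymbol{e}\|_2\le C_{\text{e}}$. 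The crucial simplification is that $\operatorname{diag}\{\boldsymbol{a}(\theta)\}$ is a diagonal \emph{unitary} matrix, because every entry of $\boldsymbol{a}(\theta)$ has unit modulus; right-multiplication by a unitary matrix does not change singular values, so $\|\boldsymbol{N}^{\text{H}}\operatorname{diag}\{\boldsymbol{a}(\theta)\}\|_2=\|\boldsymbol{N}\|_2$ independently of $\theta$. Hence the entire error contribution is exactly $C_{\text{e}}\|\boldsymbol{N}\|_2$, and I obtain $\|\boldsymbol{N}\|^*_{\tilde{\mathcal{A}}}\le\sup_\theta\|\boldsymbol{N}^{\text{H}}\boldsymbol{a}(\theta)\|_2+C_{\text{e}}\|\boldsymbol{N}\|_2$.

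Third, I bound each piece. For the error piece I would control the spectral norm of the Gaussian matrix $\boldsymbol{N}$ in two independent ways and retain the smaller. One way is $\|\boldsymbol{N}\|_2\le\|\boldsymbol{N}\|_F$, whose expectation equals the mean of a scaled chi distribution, $\sqrt{2}\,\sigma_{\text{N}}\,\Gamma((NP+1)/2)/\Gamma(NP/2)=\text{bd}_1$. The other is the classical concentration bound for the largest singular value of a Gaussian matrix, giving $\|\boldsymbol{N}\|_2\le(\sqrt{N}+\sqrt{P}+t)\sigma_{\text{N}}=\text{bd}_2$ with probability at least $1-2e^{-t^2/2}$. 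Together these yield the $\min\{\text{bd}_1,\text{bd}_2\}\,C_{\text{e}}$ term. For the steering piece, $\sup_\theta\|\boldsymbol{N}^{\text{H}}\boldsymbol{a}(\theta)\|_2$ is the standard ANM dual-norm object: for fixed $\theta$, $\boldsymbol{N}^{\text{H}}\boldsymbol{a}(\theta)$ is a length-$P$ Gaussian vector whose coordinates have variance $N\sigma_{\text{N}}^2$, so a covering net over $\theta\in[0,2\pi)$ combined with a union bound over the resulting Gaussian tails produces $\sup_\theta\|\boldsymbol{N}^{\text{H}}\boldsymbol{a}(\theta)\|_2\le\sigma_{\text{N}}\sqrt{4NP\ln N}$ with high probability, which is precisely the last term in the claim.

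Finally, I would merge the two high-probability events by a union bound and add the resulting estimates to obtain the stated inequality. I expect the steering piece to be the main obstacle: controlling the supremum of the Gaussian process $\theta\mapsto\|\boldsymbol{N}^{\text{H}}\boldsymbol{a}(\theta)\|_2$ over the continuum of angles, and pinning the constant down to exactly $\sqrt{4NP\ln N}$, requires a careful discretization/chaining estimate (or an appeal to the corresponding concentration result from the ANM literature), whereas the gain--phase contribution reduces cleanly to classical bounds on the norm of a Gaussian matrix.
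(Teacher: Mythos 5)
Your proposal follows essentially the same route as the paper's proof: both reduce the dual norm to $\sup_{\theta,\|\boldsymbol{e}\|_2\le C_{\text{e}}}\|\boldsymbol{N}^{\text{H}}[\boldsymbol{e}+\boldsymbol{a}(\theta)]\|_2$ (using the unitarity of $\operatorname{diag}\{\boldsymbol{a}(\theta)\}$ to absorb the gain-phase perturbation), split by the triangle inequality into an error term $C_{\text{e}}\|\boldsymbol{N}\|_2$ and a steering term, bound the former by the minimum of the chi-distribution mean ($\text{bd}_1$) and the Gaussian largest-singular-value concentration bound ($\text{bd}_2$), and bound the latter by $\sigma_{\text{N}}\sqrt{4NP\ln N}$ (where the paper simply cites Lemma 5.1 of its reference rather than carrying out the covering/chaining argument you sketch). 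The proposal is correct and matches the paper's argument.
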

The proof for Proposition~\ref{prop4} is given in Appendix~\ref{ap3}. Then, the regularization parameter $\tau$ is 
\begin{align}
\tau\approx \eta \left(\min\left\{\text{bd}_1,\text{bd}_2\right\}C_{\text{e}} +
\sigma_{\text{N}}\sqrt{4NP\ln N}\right).
\end{align} 
We will show that with the regularization parameter $\tau$, the probability of $\|\boldsymbol{N}\|^*_{\tilde{\mathcal{A}}}\geq \frac{\tau}{\eta}$ can be obtained as
\begin{align}
& \mathbb{P}\left(\|\boldsymbol{N}\|^*_{\tilde{\mathcal{A}}}\geq \frac{\tau}{\eta} \right)\\
 &= 
\mathbb{P}\left(\sup_{\substack{ 
		\|\boldsymbol{e}\|_2\leq C_{\text{e}} 
		\\
		\theta\in[0,2\pi) }} \left\|  \boldsymbol{N}^{\text{H}}  \left[\boldsymbol{e} 
+ \boldsymbol{a}(\theta) \right]
\right\|_2 \geq  \alpha+
\beta\right)  \notag\\
 & \leq 
\mathbb{P}\left(\sup_{	\|\boldsymbol{e}\|_2\leq C_{\text{e}} } \left\|  \boldsymbol{N}^{\text{H}}  \boldsymbol{e}  
\right\|_2  + \sup_{  \theta\in[0,2\pi)  } \left\|  \boldsymbol{N}^{\text{H}}   
\boldsymbol{a}(\theta) \right\|_2 \geq \alpha+
\beta\right)  \notag\\
 &\leq 
\mathbb{P}\left( C_{\text{e}}  \left\|  \boldsymbol{N} 
\right\|_F  + \sup_{  \theta\in[0,2\pi)  } \left\|  \boldsymbol{N}^{\text{H}}   
\boldsymbol{a}(\theta) \right\|_2 \geq  \alpha+
\beta\right)  \notag\\
& \leq \frac{1}{N^2}+\left(1- \frac{1}{N^2}\right)z(N,P), \notag
\end{align} 
where we define $\alpha\triangleq \min\left\{\text{bd}_1,\text{bd}_2\right\}C_{\text{e}} $, 
$\beta\triangleq \sigma_{\text{N}}\sqrt{4NP\ln N}$ and
\begin{align}
z(N,P)=\begin{cases}2e^{-t^2/2}, \sqrt{N}+\sqrt{P}+t \leq \sqrt{2}\frac{\Gamma((NP+1)/2)}{\Gamma(NP/2)}\\
\frac{\Gamma\left(NP/2,\left(\sqrt{N}+\sqrt{P}+t\right)^2/2\right)}{\Gamma(NP/2)}, \text{otherwise}
\end{cases}.
\end{align} 
The incomplete Gamma function is defined as $\Gamma(s,x)\triangleq \int^{\infty}_x t^{s-1}e^{-t}\,dt$.

\begin{figure}
	\centering
	\includegraphics[width=3in]{./Figures/upper}
	\caption{The upper bound of $\mathcal{E}\left\{  \left\|  \boldsymbol{N} \right\|_2\right\}$.}
	\label{upper}
\end{figure}

When we can choose $t=4$, the probability of $\mathcal{E}\left\{  \left\|  \boldsymbol{N} \right\|_2\right\} \leq \sqrt{N}+\sqrt{P}+4$ is more than $0.9993$. In Fig.~\ref{upper}, we show the two types of upper bounds, where the antenna number $N$ is $10$, and the number of measurements $P$ is from $1$ to $30$. When $P\leq 11$, we have $\text{bd}_1\leq \text{bd}_2$, and  $\text{bd}_1>\text{bd}_2$ with $P> 11$. Hence, for larger $P$, we choose  $\text{bd}_2$ as the tighter upper bound, and for smaller $P$, we choose $\text{bd}_1$.  

Therefore,  according to Theorem III.6 in~\cite{8259341}, with probability $1-\frac{1}{N^2}-\left(1- \frac{1}{N^2}\right)z(N,P)$, the reconstruction error is limited by
\begin{align}\label{eq35}
\left\|\hat{\boldsymbol{X}}-\boldsymbol{X}_{*}\right\|^2_F\leq \tau^2,
\end{align}
where $\hat{\boldsymbol{X}}$ denotes the estimated $\boldsymbol{X}$ by minimizing the atomic norm, and $\boldsymbol{X}_{*}$ denotes the ground-truth $\boldsymbol{X}$.

\section{Sparse Gain-Phase Errors}
In the scenario with only a few gain-phase errors, the gain-phase errors are sparse. The proposed type of atomic norm can be rewritten as 
\begin{align} 
	\|\boldsymbol{X}\|_{\tilde{\mathcal{A}}}& = \inf\Bigg\{
 \|\boldsymbol{b}\|_1 \bigg|
 \boldsymbol{X}=\sum_{k} b_k(\operatorname{diag}\{\boldsymbol{e}\}+\boldsymbol{I}_N) \boldsymbol{a}(\theta_k)\boldsymbol{d}_k^{\text{T}},\notag\\
 &\qquad\qquad \|\boldsymbol{e}\|_1 \leq C_{\text{e}}, \|\boldsymbol{d}_k\|_2 \leq 1,b_k\geq 0 
	\Bigg\},\notag
\end{align}
where the $\ell_1$ norm $\|\boldsymbol{e}\|_1$ is used to describe the sparse gain-phase errors. We formulate $\boldsymbol{e}'\triangleq \operatorname{diag}(\boldsymbol{e})\boldsymbol{a}(\theta) \in\mathbb{C}^{N\times 1}$, then, the $\ell_1$ norm of $\boldsymbol{e}'$ can be simplified as 
	\begin{align}
		\|\boldsymbol{e}'\|_1 =\|\operatorname{diag}(\boldsymbol{e})\boldsymbol{a}(\theta)\|_1  =\|\boldsymbol{e}\|_1\leq C_{\text{e}}.\notag 
	\end{align}
	Therefore,  the corresponding dual norm can be obtained as
\begin{align}
\|\boldsymbol{U}\|^*_{\tilde{\mathcal{A}}} &= \sup_{\|\boldsymbol{X}\|_{\tilde{\mathcal{A}}}\leq 1} \langle \boldsymbol{X},\boldsymbol{U}\rangle\\
& = \sup_{\substack{ 
		\|\boldsymbol{e}\|_1\leq C_{\text{e}}
		\\
		\theta\in[0,2\pi) }}  \left\|  \boldsymbol{U}^{\text{H}}  \left[\boldsymbol{e} 
+ \boldsymbol{a}(\theta) \right]
\right\|_2, 
\notag 
\end{align}
where  we  reuse the notation $\boldsymbol{e}$ instead of $\boldsymbol{e}'$ to avoid introducing additional symbol $\boldsymbol{e}'$.

If $\|\boldsymbol{e}\|_2\leq \frac{1}{\sqrt{N}}C_\text{e}$, we have $\|\boldsymbol{e}\|_1\leq C_{\text{e}}$. Therefore, we can obtain 
\begin{align}
\|\boldsymbol{U}\|^*_{\tilde{\mathcal{A}}} & \leq \sup_{\substack{ 
		\|\boldsymbol{e}\|_2\leq \frac{C_{\text{e}}}{\sqrt{N}} 
		\\
		\theta\in[0,2\pi) }}  \left\|  \boldsymbol{U}^{\text{H}}  \left[\boldsymbol{e} 
+ \boldsymbol{a}(\theta) \right]
\right\|_2\\
  & \leq \sup_{\substack{   
		\theta\in[0,2\pi) }}  \tau^2\left[\left(2+\frac{1}{\sqrt{N}}\right)C_{\text{e}} \|\boldsymbol{Q}\|_2+  \boldsymbol{a}^\text{H}(\theta)\boldsymbol{Q} \boldsymbol{a}(\theta) \right]. \notag
\end{align} 
Similarly, the SDP problem with the $\ell_1$ norm in atomic norm can be obtained as
\begin{align}
	\min_{\substack{
		\boldsymbol{U}\in\mathbb{C}^{N\times P}\\ \boldsymbol{Q}\in\mathbb{C}^{N\times N}}} \quad & \|\boldsymbol{Y}-\boldsymbol{U}\|_{F}\\
	\text{s.t.}\quad & \begin{bmatrix}
	\boldsymbol{Q} & \boldsymbol{U}\\
	\boldsymbol{U}^{\text{H}}& \tau^2 \boldsymbol{I}_P
	\end{bmatrix}\succeq 0\notag \\
	& \sum_n Q_{n,n+k}= 0\ (k\neq 0) \notag\\
		& \operatorname{Tr}(\boldsymbol{Q}) +\left(2+\frac{1}{\sqrt{N}}\right)C_{\text{e}} \left\|\boldsymbol{Q}\right\|_2 -1\leq 0  \notag\\
	&\boldsymbol{Q}\text{ is Hermitian},\notag  
	\end{align}
which can be solved efficiently.

\section{CRLB For DOA Estimation With Gain-Phase Errors}\label{sec5}
For the DOA estimation problem $\boldsymbol{Y}=\boldsymbol{GAS}+\boldsymbol{N}$, we use $\boldsymbol{A}\triangleq \begin{bmatrix}
\boldsymbol{a}(\theta_0),\boldsymbol{a}(\theta_1),\dots,\boldsymbol{a}(\theta_{K-1})
\end{bmatrix}$ to denote the steering matrix, and we assume $\boldsymbol{n}=\operatorname{vec}\{\boldsymbol{N}\}\sim\mathcal{CN}(\boldsymbol{0},\sigma_{\text{n}}^2\boldsymbol{I})$. We consider $K$ unknown signals $\boldsymbol{S}=\begin{bmatrix}
\boldsymbol{s}_0,\boldsymbol{s}_1,\dots,\boldsymbol{s}_{K-1}
\end{bmatrix}^{\text{T}}$, and we assume $\boldsymbol{s} $ follows the zero mean Gaussian distribution with $\mathcal{E}(\boldsymbol{s}\boldsymbol{s}^{\text{H}})=\boldsymbol{B}$ and $\boldsymbol{s}\sim\mathcal{CN}(\boldsymbol{0},\boldsymbol{B})$, where $\boldsymbol{s}\triangleq \operatorname{vec}\{\boldsymbol{S}\}$. Then, in this section, the CRLB  will be derived theoretically to indicate the DOA estimation performance of the proposed method.

The received signal can be written in a vector form as
\begin{align}
\boldsymbol{y}\triangleq \operatorname{vec}\{\boldsymbol{Y}\}=(\boldsymbol{I}\otimes \boldsymbol{GA})\boldsymbol{s}+\boldsymbol{n},
\end{align} 
where $\boldsymbol{G}\triangleq \operatorname{diag}\{\boldsymbol{g}\}$. Therefore, with the DOA parameter $\boldsymbol{\theta}\triangleq\begin{bmatrix}
\theta_0,\theta_1,\dots,\theta_{K-1}
\end{bmatrix}^{\text{T}}$ and the gain-phase error $\boldsymbol{g}\triangleq \begin{bmatrix}
g_0,g_1,\dots,g_{N-1}
\end{bmatrix}^{\text{T}}$, and the received signal follows the Gaussian distribution
\begin{align}
\boldsymbol{y}\sim \mathcal{CN}(\boldsymbol{0},\boldsymbol{C}),
\end{align}
where $\boldsymbol{C}\triangleq (\boldsymbol{I}\otimes \boldsymbol{GA})\boldsymbol{B}[\boldsymbol{I}\otimes (\boldsymbol{GA})^{\text{H}}]+\sigma^2_{\text{n}}\boldsymbol{I}$.
The probability density function of Gaussian distribution $\boldsymbol{y}\sim \mathcal{CN}(\boldsymbol{0},\boldsymbol{C})$ can be expressed as
\begin{align}
f(\boldsymbol{x})=\frac{1}{\pi^N \det\{\boldsymbol{C}\}}e^{-\boldsymbol{y}^{\text{H}}\boldsymbol{C}^{-1}\boldsymbol{y}}.
\end{align}

The Fisher information matrix $\boldsymbol{F}$ can be written as
\begin{align}
\boldsymbol{F}\triangleq 
\begin{bmatrix}
\boldsymbol{F}_{1,1}&\boldsymbol{F}_{1,2}\\
\boldsymbol{F}_{2,1}&\boldsymbol{F}_{2,2}
\end{bmatrix},
\end{align}
where we have
\begin{align}
\boldsymbol{F}_{1,1}&=-\mathcal{E}\left\{\left.
\frac{\partial \ln f(\boldsymbol{y};\boldsymbol{\theta},\boldsymbol{g})}{\partial \boldsymbol{\theta} \partial \boldsymbol{\theta}}\right|\boldsymbol{\theta},\boldsymbol{g}
\right\},\\
\boldsymbol{F}_{1,2}&=-\mathcal{E}\left\{\left.
\frac{\partial \ln f(\boldsymbol{y};\boldsymbol{\theta},\boldsymbol{g})}{\partial \boldsymbol{\theta} \partial \boldsymbol{g}}\right|\boldsymbol{\theta},\boldsymbol{g}
\right\},\\
\boldsymbol{F}_{2,1}&=-\mathcal{E}\left\{\left.
\frac{\partial \ln f(\boldsymbol{y};\boldsymbol{\theta},\boldsymbol{g})}{\partial \boldsymbol{g} \partial \boldsymbol{\theta}}\right|\boldsymbol{\theta},\boldsymbol{g}
\right\},\\
\boldsymbol{F}_{2,2}&=-\mathcal{E}\left\{\left.
\frac{\partial \ln f(\boldsymbol{y};\boldsymbol{\theta},\boldsymbol{g})}{\partial \boldsymbol{g} \partial \boldsymbol{g}}\right|\boldsymbol{\theta},\boldsymbol{g}
\right\}.
\end{align}
The entries of Fisher information matrix $\boldsymbol{F}$ are given in Appendix~\ref{entry}.

The CRLB of DOA estimation can be expressed as 
\begin{align}
\operatorname{var}\{\boldsymbol{\theta}\}\geq\sum_{k=0}^{K-1} \left[\boldsymbol{F}^{-1}\right]_{k,k}.
\end{align}
However, in the parameter estimation problems, when the dimension of the unknown parameter is high, the FIM will be singular or very nearly so, especially in the case with sparse reconstruction, where the number of samples is much less than that in the oversampling scenario. The derivation of CRLB using $\boldsymbol{F}^{-1}$ can  be only obtained by assuming that the FIM is positive defined~\cite{4838872}. In our problem of sparse estimation with unknown gain-phase errors, the Fisher information matrix is singular, and it is inconvenient to obtain the inverse of the Fisher information matrix,  so we use a lower bound of FIM to describe the estimation performance as~\cite{2241683}
 \begin{align}
\operatorname{var}\{\boldsymbol{\theta}\}\geq\sum_{k=0}^{K-1} F^{-1}_{k,k}.
\end{align}

\section{Simulation Results}\label{sec6}
\begin{table}[!t]
	\renewcommand{\arraystretch}{1.3}
	\caption{Simulation Parameters}
	\label{table1}
	\centering
	\begin{tabular}{cc}
		\hline
		\textbf{Parameter} & \textbf{Val1ue}\\
		\hline
		The signal-to-noise ratio (SNR) of received signal & $ 20 $ dB\\
		The number of pulses $P$ & $5$\\ 
		The number of antennas $N$ & $10$\\
		The number of signals $K$& $3$\\
		The space between antennas $d$& $0.5$ wavelength\\ 
		The detection DOA range & $\left[\ang{-70},\ang{70}\right]$\\
		The standard deviation of gain error $\sigma_{\text{A}}$ & $0.15$\\
		The standard deviation of phase error $\sigma_{\text{P}}$ & $10$ in degree\\ 
		\hline
	\end{tabular}
\end{table}

The simulation parameters are given in Table~\ref{table1}. The number of Monte Carlo simulations  is $10^3$. We consider the DOA estimation in the scenario with much few measurements (snapshots) $P=5$. The minimum separation between signals in degree is $\Delta \geq \ang{10}$. All the simulation results are obtained on a PC with Matlab R2018b with a 2.9 GHz Intel Core i5 and 8 GB of RAM. The code of proposed algorithm will be available online after that the paper is accepted.

The gain errors among antennas are generated by a Gaussian distribution 
\begin{align}
g_{n} \sim\mathcal{N}(0,\sigma^2_{\text{A}}),\quad n=0,1,\dots, N-1,
\end{align}
where $\sigma^2_{\text{A}}$ denotes the variance of gain errors.
The phase errors in degree also follow a Gaussian distribution 
\begin{align}
\phi_{n} \sim\mathcal{N}(0,\sigma^2_{\text{P}}),\quad n=0,1,\dots, N-1,
\end{align}
where $\sigma^2_{\text{P}}$ denotes the variance of phase errors.
Then, the normalized gain for the $n$-th antenna with gain-phase error is $(1+g_{n})e^{j \phi_{n}}$. 
Hence, the parameter $C_{\text{e}}$ can be chosen as the one with $C_e^2 \geq N(\sigma^2_{\text{A}}+\sigma^2_{\text{P}})=0.514$.

\begin{table} 
	\renewcommand{\arraystretch}{1.3}
	\caption{DOA Estimation}
	\label{table2}
	\centering
	\begin{tabular}{ccccc}
		\hline
		\textbf{Methods} & \textbf{Signal $1$} &  \textbf{Signal $2$} & \textbf{Signal $3$}&RMSE (deg)\\
		\hline
		Ground-truth &  $\ang{-56.8889}$ & $\ang{-7.6806}$ &  $\ang{5.9595}$ &-- \\
		ANM & $\ang{-56.4480}$ &$\ang{5.6840}$&$\ang{28.7000}$&$232$\\ 
		MUSIC & $\ang{-56.4900}$&$\ang{-7.6020}$&$\ang{6.0900}$&$0.06076$\\
		SOMP  & $\ang{ -56.3640}$&$\ang{-8.2460}$&$\ang{5.7400}$&$0.2144$\\
		SBL & $\ang{-56.0000}$&$\ang{-7.0000}$&$\ang{5.6000}$&$0.4608$\\ 
		Proposed method & $\ang{-56.6860}$&$\ang{-7.6300}$&$\ang{5.9640}$ & $0.01458$\\
		\hline
	\end{tabular}
\end{table}

First, we try to estimate $3$ signals from the received signals, and the ground-truth DOAs are  $\ang{-56.8889}$, $\ang{-7.6806}$, and $\ang{5.9595}$. When the ANM method is adopted, the DOAs are estimated by the polynomial of the ANM method. As shown in Fig.~\ref{ANM}, the polynomial of ANM method is given. Since the antennas in the array have gain-phase errors, the polynomial has multiple peak values, and the DOAs cannot be estimated well. The estimated DOAs are  $\ang{-56.4480}$, $\ang{5.6840}$, and $\ang{28.7000}$, so the estimation error is much large. However, when the proposed method with GP-ANM is adopted, we can obtain the polynomial in Fig.~\ref{poly}. The peak values are well distinguished, and the estimated DOAs are $\ang{-56.6860}$, $\ang{-7.6300}$, and $\ang{5.9640}$. Therefore, the proposed method outperforms the traditional ANM method in the DOA estimation with gain-phase errors.

\begin{figure}
	\centering
	\includegraphics[width=3.2in]{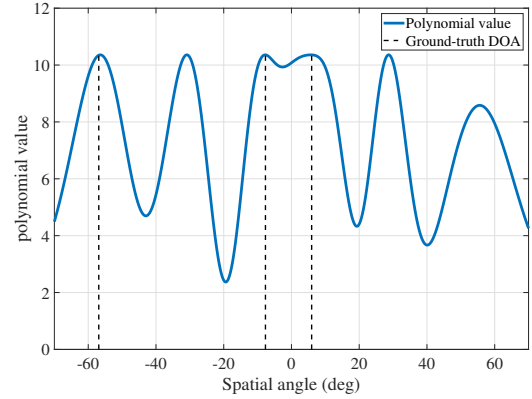}
	\caption{The polynomial in ANM method.}
	\label{ANM}
\end{figure} 

\begin{figure}
	\centering
	\includegraphics[width=3in]{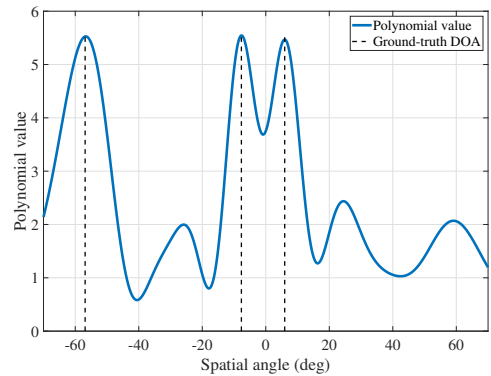}
	\caption{The polynomial in the proposed method.}
	\label{poly}
\end{figure}

\begin{figure}
	\centering
	\includegraphics[width=2.5in]{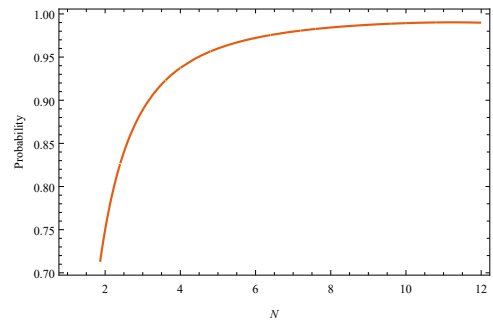}
	\caption{The probability for signal reconstruction in (\ref{eq35}).}
	\label{NN}
\end{figure} 

 With the simulation parameters in Table~\ref{table1}, the probability of signal reconstruction is shown in Fig.~\ref{NN} with the different number of antennas, and this figure is different from the direct performance of DOA estimation, such as the root mean square error (RMSE) of DOA estimation. When the number of antennas increases, the probability that the sparse reconstruction signal can approach the ground-truth signal is also improved. Therefore, a high probability that the sparse signal can be reconstructed with limited error can be achieved by selecting an appropriate regularization parameter $\tau$.

Additional, we compare the DOA estimation performance of the proposed method with existing methods, including MUSIC, simultaneous orthogonal matching pursuit (SOMP)~\cite{7738592}, and sparse Bayesian learning (SBL)~\cite{8667328} methods. MUSIC method is a subspace-based method and has been widely used in the DOA estimation with better performance and robustness. SOMP method is the sparse-based method and has been widely used in the sparse reconstruction problem. SBL method is a sparse method and has great reconstruction performance but has high computational complexity. The DOA estimation performance is measured by the RMSE. RMSE is defined as
\begin{align}
    \text{RMSE}\triangleq \sqrt{\frac{1}{KN_{\text{mc}}}\sum_{n_{\text{mc}}=0}^{N_{\text{mc}}-1}\sum_{k=0}^{K-1}\left(\theta_{n_{\text{mc},k}}-\hat{\theta}_{n_{\text{mc},k}}\right)^2},
\end{align}
where $N_{\text{mc}}$ denotes the number of Monte Carlo simulations, and $K$ denotes the number of signals in one simulation. $\theta_{n_{\text{mc},k}}$ is the ground-truth DOA of the $k$-th signal during the $n_{\text{mc}}$-th simulation, and $\hat{\theta}_{n_{\text{mc},k}}$ is the corresponding estimated DOA. In this paper, we assume that the number of signals can be estimated precisely using the traditional methods, such as Akaike information theoretic criteria (AIC) and minimum description length (MDL)~\cite{1164557,330365,995060}. The RMSEs of ANM, MUSIC, SOMP, SBL and the proposed method are shown in Table.~\ref{table1}. The RMSE of proposed method is $0.01458$ in deg, and $76\%$ better than MUSIC. Additionally, since the multiple peak values in the polynomial of ANM method, the DOA cannot be estimated well and the RMSE of  ANM method is much larger than other methods. In the SBL method, the spatial angle is discretized into grids with the grid size being $\ang{0.5}$ to have a comparable computational time with the proposed method.  The spatial spectrums of these $4$ methods are shown in Fig.~\ref{spectrum}, where we can see that the spectrum of SBL is much better than that of MUSIC method. SOMP and proposed methods are the sparse-based method, so we show the reconstruction results in the figure of spatial spectrum. The spatial spectrum of proposed method is much close to the ground-truth DOA.

\begin{figure}
	\centering
	\includegraphics[width=3in]{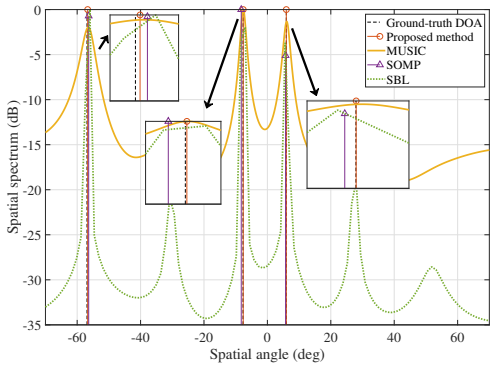}
	\caption{The spatial spectrum for DOA estimation.}
	\label{spectrum}
\end{figure} 

\begin{figure}
	\centering
	\includegraphics[width=3in]{./Figures/crlbOldAtomic}
	\caption{The DOA estimation with different gain-phase errors using the proposed method (ANM).}
	\label{crlbANM}
\end{figure}

\begin{figure}
	\centering
	\includegraphics[width=3in]{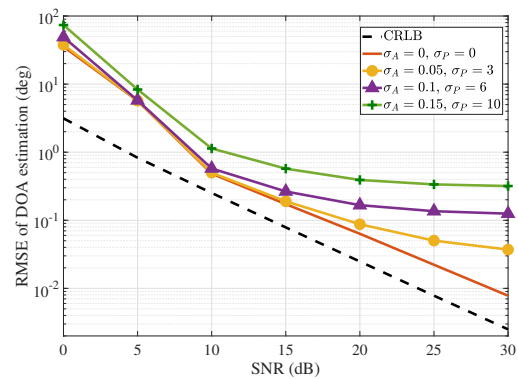}
	\caption{The DOA estimation with different gain-phase errors using the proposed method (Proposed method).}
	\label{crlb}
\end{figure}

Then, to show the DOA estimation performance with different variances of grain-phase errors, we give the DOA estimation performance with different variances in Fig.~\ref{crlbANM} and Fig.~\ref{crlb}, where Fig.~\ref{crlbANM} uses the traditional ANM method and Fig.~\ref{crlb} uses the proposed method.  When the variance of the gain-phase error is small, both  ANM and proposed methods can approach the CRLB in DOA estimation. However, when the variance of the gain-phase error is large, the ANM method degrades the RMSE significantly. The proposed method can also keep the estimation performance well. Therefore, with the GP-ANM, the effect of gain-phase error can be reduced effectively.

\begin{figure}
	\centering
	\includegraphics[width=3in]{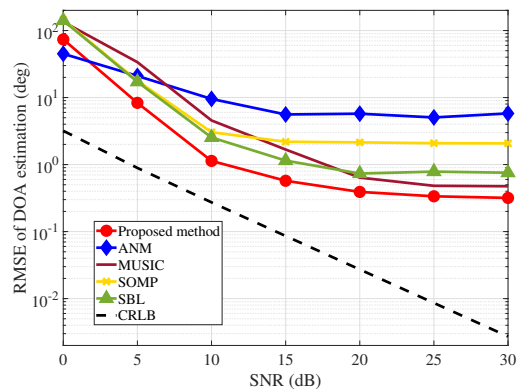}
	\caption{The DOA estimation with different SNRs.}
	\label{SNR}
\end{figure} 

\begin{figure}
	\centering
	\includegraphics[width=3.2in]{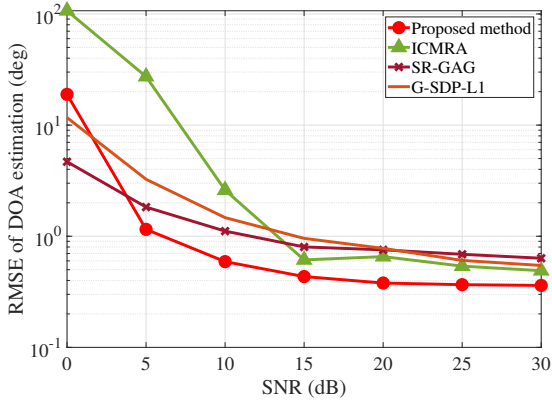}
	\caption{The DOA estimation performance compared with ICMRA, SR-GAG and G-SDP-L1 methods.}
	\label{icmra}
\end{figure}

For different SNRs, the DOA estimation performance is shown in Fig.~\ref{SNR}, where the SNR is from $0$ dB to $30$ dB. When the SNR is higher than $15$ dB, the estimation performance is almost the same. MUSIC method can achieve better estimation performance than the ANM, SOMP, and SBL methods in the scenario with gain-phase errors. The proposed method achieves the best estimation performance among these methods when the SNR is higher than $5$ dB. Since the CRLB does not consider the gain-phase error, the CRLB can be further improved with higher SNR, but the estimation performance has platform effect and cannot be improved when SNR is higher than $20$ dB. Moreover, as shown in Fig.~\ref{SNR}, when the SNR of the received signal is $20$ dB, the RMSEs of the DOA estimation using the ANM method, the SOMP method, the SBL method, the MUSIC method and the proposed method are $\ang{5.741}$, $\ang{2.144}$, $\ang{0.735}$ $\ang{0.639}$ $\ang{0.391}$, respectively. With the gain-phase errors, the ANM method cannot estimate the DOA accurately, but the MUSIC method as a robust method can achieve a higher DOA resolution than the ANM method.  Compared with the ANM method, the proposed method can improve the DOA resolution about $\ang{5.35}$ in the scenario with the standard derivation of gain error being $\sigma_{\text{A}}=0.15$ and the that of phase error being $\sigma_{\text{P}}=10$ in degree. 
Additionally, the DOA estimation performance of the proposed method is also compared with the improved covariance matrix reconstruction approach (ICMRA)~\cite{8320855}, soft recovery approach for general antenna geometries (SR-GAG)~\cite{barzegar2017estimation}, and a generalization of SDP formulation of $\ell_1$ norm optimization problem (G-SDP-L1)~\cite{16M1071730}. The ICMRA method is based on the low-rank reconstruction, where a covariance matrix of the received signals is used for the DOA estimation. In the simulation section, the number of antennas is $10$ and the snapshots are $5$, so the covariance matrix cannot be accurately estimated. The SR-GAG method is proposed for a general antenna geometry. When this method is applied to the system model considered in this paper, the method will be the same with the ANM method, since the antenna geometry is ULA. The G-SDP-L1 method is a general case of gauge function and atomic norm, but this extension cannot describe the gain-phase errors well. Moreover, these methods have not considered the gain-phase errors in the system model. Therefore, better performance can be achieved by the proposed method.

\begin{figure}
	\centering
	\includegraphics[width=3.2in]{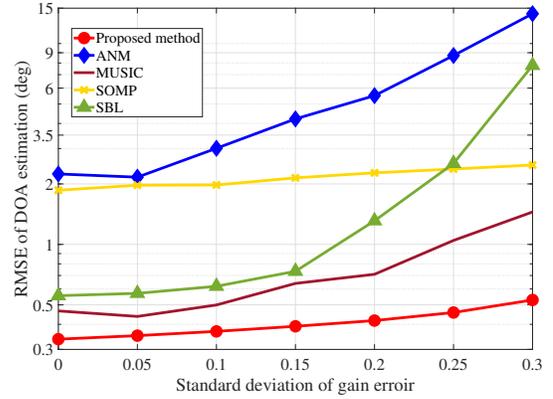}
	\caption{The DOA estimation with different gain errors.}
	\label{amp}
\end{figure} 

\begin{figure}
	\centering
	\includegraphics[width=3in]{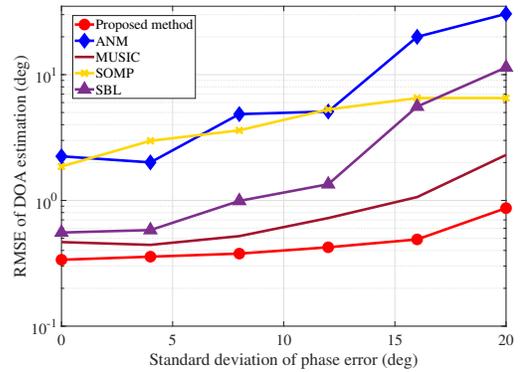}
	\caption{The DOA estimation with different phase errors.}
	\label{phase}
\end{figure}

\begin{figure}
		\centering
		\includegraphics[width=3in]{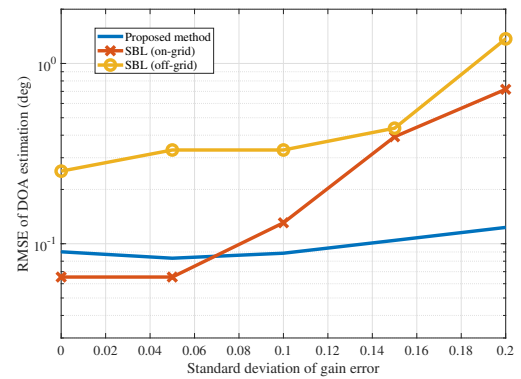}
		\caption{The DOA estimation performance compared with the SBL method.}
		\label{sbl}
	\end{figure} 

When we keep the standard deviation of phase errors $\sigma_{\text{P}}$ as $10$ in degree, and change that of gain errors, the corresponding RMSE of DOA estimation is shown in Fig.~\ref{amp}. $\sigma_{\text{A}}$ changes from $0$ to $0.3$, and the estimation error is only improved from $0.33$ to $0.32$ using the proposed method. However, the existing methods degrade the DOA estimation performance significantly with larger gain errors. Moreover, keeping $\sigma_{\text{A}}=0.15$, we change $\sigma_{\text{P}}$ from $0$ to $20$ in degree, and the DOA estimation performance is shown in Fig.~\ref{phase}. As shown in this figure, the proposed method achieves the best estimation performance among these methods. Therefore, in the scenario with gain-phase errors, the proposed method can work well. Moreover, with only the gain errors, the DOA estimation performance of the proposed method is also compared with that of the SBL method, as shown in Fig.~\ref{sbl}. 
``SBL (on-grid)'' is the DOA estimation performance with the signal angles being at the discretized angles exactly in the spatial domain, and ``SBL (off-grid)'' means that the signals can be not precisely at the discretized angles. As shown in Fig.~\ref{sbl}, when the signals are on-grid, the SBL method outperforms the proposed method in the scenario with small gain-phase errors.  However, in the scenario with large gain-phase errors or the off-grid signals, the estimation performance of the SBL method is worse than that of the proposed method.

\begin{figure}
	\centering
	\includegraphics[width=3in]{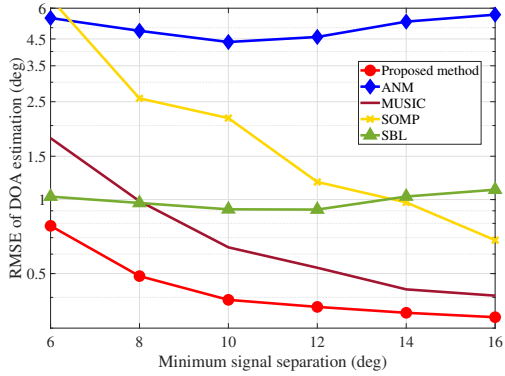}
	\caption{The DOA estimation with different minimum signal separations.}
	\label{separation}
\end{figure} 

In the super-resolution methods, the minimum separation between signals is important and shows the ability of super-resolution, so we show the DOA estimation performance with different minimum separations in Fig.~\ref{separation}. With larger separation, the correlation between the received signal can be reduced so that the better estimation performance can be achieved. Additionally, the number of measurements is vital for the complexity consideration, and the corresponding estimation performance is shown in Fig.~\ref{P}. The proposed can outperform the existing methods when the number of measurements is more than $3$.

\begin{figure}
	\centering
	\includegraphics[width=3in]{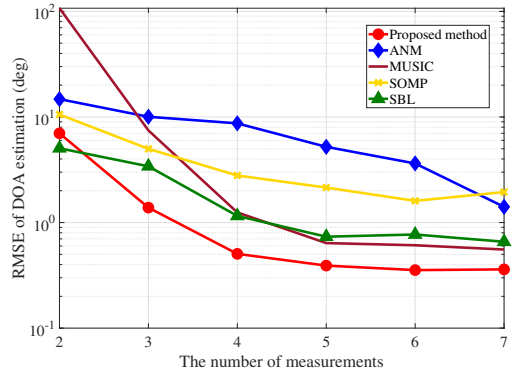}
	\caption{The DOA estimation with different numbers of measurements.}
	\label{P}
\end{figure} 

\begin{figure}
	\centering
	\includegraphics[width=3in]{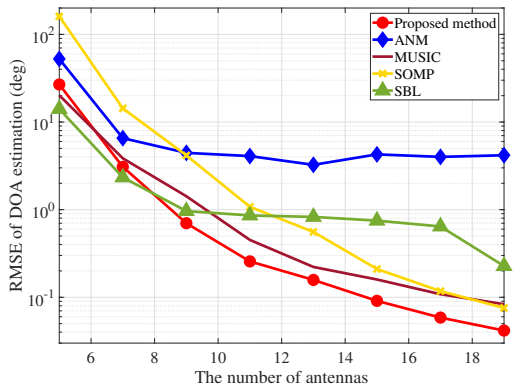}
	\caption{The DOA estimation with different numbers of antennas.}
	\label{N}
\end{figure} 

\begin{figure}
	\centering
	\includegraphics[width=3in]{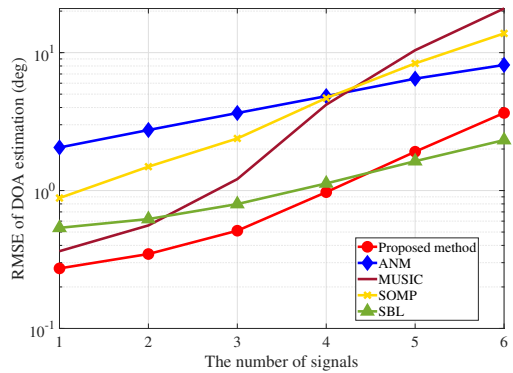}
	\caption{The DOA estimation with different numbers of signals.}
	\label{K}
\end{figure} 

With different numbers of antennas and signals, the DOA estimation performance is shown in Fig.~\ref{N} and Fig.~\ref{K}, respectively. As shown in these two figures, when the number of antennas is more than $10$ or the number of signals is less than $4$, the proposed method can achieve better estimation performance. Furthermore, the computational time is shown in Table~\ref{table3}, the proposed method has relative higher computational complexity. As shown in the results of this section, the proposed method can generally achieve better DOA estimation performance in the scenario with gain-phase errors.

\begin{table}[!t]
	\renewcommand{\arraystretch}{1.3}
	\caption{Computational Time}
	\label{table3}
	\centering
	\begin{tabular}{cccccc}
		\hline
		  & \textbf{ANM} &  \textbf{MUSIC} & \textbf{SOMP}&\textbf{SBL}& \textbf{Proposed method}\\
		\hline 
		\textbf{Time (s)} & $2.3047$ &$ 0.0319 $ &$ 0.0501 $ &$0.0780$ & $1.0840$ \\ 
		\hline
	\end{tabular}
\end{table}

\section{Conclusions}\label{sec7}
The  DOA estimation problem has been considered in the scenario with gain-phased errors, and the GP-ANM has been proposed to formulate the optimization problem. Then, the SDP formulation has been derived to solve the DOA estimation problem efficiently, and the corresponding regularization parameter has been obtained theoretically. Simulation results show that the proposed DOA estimation method outperforms the existing methods in the scenario with gain-phase errors. Future work will focus on the generalized atomic norm in the applications with imperfect antennas.

\appendices
\section{The SDP Proof for the Optimization Problem (\ref{sdp})}\label{sdpAP}
First, in the constraint $\operatorname{Tr}(\boldsymbol{Q}) + (C_{\text{e}}+2\sqrt{N})C_{\text{e}} \left\|\boldsymbol{Q}\right\|_2  -1 \leq 
	0$, both the norm operation $\left\|\boldsymbol{Q}\right\|_2 $  and the trace operation $\operatorname{Tr}(\boldsymbol{Q})$ are convex functions, so this constraint is a convex constraint. The optimization problem (\ref{sdp}) is a convex optimization problem.
	
Then, to show that the constraint $\operatorname{Tr}(\boldsymbol{Q}) + (C_{\text{e}}+2\sqrt{N})C_{\text{e}} \left\|\boldsymbol{Q}\right\|_2  -1 \leq 
	0$ is a SDP constraint, we can formulate a semidefinite matrix
	\begin{align} \label{eq72}		\begin{bmatrix}
			(1-\operatorname{Tr}(\boldsymbol{Q}))\boldsymbol{I} & (C_{\text{e}}+2\sqrt{N})C_{\text{e}} \boldsymbol{Q}^{\text{H}} \\ (C_{\text{e}}+2\sqrt{N})C_{\text{e}} \boldsymbol{Q} & (1-\operatorname{Tr}(\boldsymbol{Q}))\boldsymbol{I} 
		\end{bmatrix}\succeq 0,
	\end{align} 
	with the Schur complement theory, if and only if we have
	\begin{align}
    	& (1-\operatorname{Tr}(\boldsymbol{Q}))\boldsymbol{I} \succeq 0\\
    	& (1-\operatorname{Tr}(\boldsymbol{Q}))^2C_{\text{e}}^2\boldsymbol{I}- (C_{\text{e}}+2\sqrt{N})^2\boldsymbol{Q}^{\text{H}}  \boldsymbol{Q}\succeq 0.\label{eq74}
    \end{align}
From (\ref{eq74}), for arbitrary vector $\boldsymbol{t}$, we can obtain a function $f(\boldsymbol{t})\triangleq \boldsymbol{t}^{\text{H}}\left[(1-\operatorname{Tr}(\boldsymbol{Q}))^2\boldsymbol{I}- (C_{\text{e}}+2\sqrt{N})^2C_{\text{e}}^2\boldsymbol{Q}^{\text{H}}  \boldsymbol{Q}\right]\boldsymbol{t}\geq 0$, and formulate an optimization problem 
\begin{align}\label{eq75}
	\min_{\boldsymbol{t}}\quad &f(\boldsymbol{t})\\
	\text{s.t.}\quad &\|\boldsymbol{t}\|_2= C_{\text{t}},\notag
\end{align}
where $C_{\text{t}}$ is a positive constant. From (\ref{eq75}), the minimum value of $f(\boldsymbol{t})$ can be achieved as $f(\|\boldsymbol{t}\|_2\boldsymbol{q}_{\text{max}})$, where $\boldsymbol{q}_{\text{max}}$ is an eigenvector  corresponding to the maximum eigenvalue $\lambda_{\text{max}}$ of $\boldsymbol{Q}^{\text{H}}\boldsymbol{Q}$. Hence, (\ref{eq74}) is satisfied, if and only if,  for arbitrary $\boldsymbol{t}$, we have $f(\|\boldsymbol{t}\|_2\boldsymbol{q}_{\text{max}})\geq 0$.

Since  $\|\boldsymbol{Q}\|^2_2=\lambda_{\text{max}}$, we can simplify $f(\|\boldsymbol{t}\|_2\boldsymbol{q}_{\text{max}})$ as
\begin{align}\label{eq76}
	&f(\|\boldsymbol{t}\|_2\boldsymbol{q}_{\text{max}}) = \|\boldsymbol{t}\|_2\boldsymbol{q}_{\text{max}}^{\text{H}}(1-\operatorname{Tr}(\boldsymbol{Q}))^2\boldsymbol{I}\|\boldsymbol{t}\|_2\boldsymbol{q}_{\text{max}}\notag\\
	&\qquad-\|\boldsymbol{t}\|_2\boldsymbol{q}_{\text{max}}^{\text{H}} (C_{\text{e}}+2\sqrt{N})^2C_{\text{e}}^2\boldsymbol{Q}^{\text{H}}  \boldsymbol{Q}\|\boldsymbol{t}\|_2\boldsymbol{q}_{\text{max}}\\
&\quad = \|\boldsymbol{t}\|_2^2 (1-\operatorname{Tr}(\boldsymbol{Q}))^2-\|\boldsymbol{t}\|^2_2(C_{\text{e}}+2\sqrt{N})^2 C_{\text{e}}^2\boldsymbol{q}_{\text{max}}^{\text{H}}  \boldsymbol{Q}^{\text{H}}  \boldsymbol{Q} \boldsymbol{q}_{\text{max}}\notag\\
&\quad = \|\boldsymbol{t}\|_2^2 (1-\operatorname{Tr}(\boldsymbol{Q}))^2- \|\boldsymbol{t}\|^2_2(C_{\text{e}}+2\sqrt{N})^2C_{\text{e}}^2  \|\boldsymbol{Q}\|^2_2.\notag 
\end{align}
$f(\|\boldsymbol{t}\|_2\boldsymbol{q}_{\text{max}})\geq 0$ is equal to (\ref{eq76})$\geq 0$ and implies that $\operatorname{Tr}(\boldsymbol{Q}) + (C_{\text{e}}+2\sqrt{N})C_{\text{e}}\left\|\boldsymbol{Q}\right\|_2 -1\leq 0$, which is the constraint in (\ref{sdp}).

Finally, the constraint  $\operatorname{Tr}(\boldsymbol{Q}) + (C_{\text{e}}+2\sqrt{N})C_{\text{e}}\left\|\boldsymbol{Q}\right\|_2  -1 \leq 
	0$ in (\ref{sdp}) is equal to the semidefinite matrix condition in (\ref{eq72}), so the optimization problem (\ref{sdp}) is a convex SDP problem.

\section{The Proof for Proposition~\ref{prop2}}\label{ap2}
In the  dual problem (\ref{dual13}), with the  atomic norm definition having a gain-phase error $\boldsymbol{e}$ in (\ref{atomic}), the dual norm $\|\boldsymbol{U}\|^*_{\tilde{\mathcal{A}}}$ can be expressed as
\begin{align}\label{17}
\|\boldsymbol{U}\|^*_{\tilde{\mathcal{A}}} &= \sup_{\|\boldsymbol{X}\|_{\tilde{\mathcal{A}}}\leq 1} \langle \boldsymbol{X},\boldsymbol{U}\rangle\\
& \overset{(a)}{=} \sup_{\substack{ \|\boldsymbol{b}\|_1 \leq 1\\
		\theta_k\in[0,2\pi)\\
		\|\boldsymbol{d}_k\|_2\leq 1\\
		\|\boldsymbol{e}\|_2\leq C_{\text{e}}}}  \left \langle \sum_{k=0}^{K-1} b_k (\operatorname{diag}\{\boldsymbol{e}\}+\boldsymbol{I}) \boldsymbol{a}(\theta_k)\boldsymbol{d}_k^{\text{T}},\boldsymbol{U} 
\right\rangle\notag\\
& = \sup_{\substack{ \|\boldsymbol{b}\|_1 \leq 1\\
		\theta_k\in[0,2\pi)\\
		\|\boldsymbol{d}_k\|_2\leq 1\\
		\|\boldsymbol{e}\|_2\leq C_{\text{e}}}} \sum_{k=0}^{K-1} \left \langle  b_k (\operatorname{diag}\{\boldsymbol{e}\}+\boldsymbol{I}) \boldsymbol{a}(\theta_k)\boldsymbol{d}_k^{\text{T}},\boldsymbol{U} 
\right\rangle\notag\\
&  \overset{(b)}{=}  \sup_{\substack{ \|\boldsymbol{b}\|_1 \leq 1\\
		\theta_k\in[0,2\pi)\\
		\|\boldsymbol{d}_k\|_2\leq 1\\
		\|\boldsymbol{e}\|_2\leq C_{\text{e}}}} \sum_{k=0}^{K-1} 
\mathcal{R}\left\{ \boldsymbol{d}_k^{\text{T}}\boldsymbol{U}^{\text{H}}b_k (\operatorname{diag}\{\boldsymbol{e}\}+\boldsymbol{I}) \boldsymbol{a}(\theta_k) \right\}
\notag\\
&  \overset{(c)}{=} \sup_{\substack{ \|\boldsymbol{b}\|_1 \leq 1\\
		\theta_k\in[0,2\pi)\\ 
		\|\boldsymbol{e}\|_2\leq C_{\text{e}}}} \sum_{k=0}^{K-1} 
b_k \left\|  \boldsymbol{U}^{\text{H}} (\operatorname{diag}\{\boldsymbol{e}\}+\boldsymbol{I}) \boldsymbol{a}(\theta) 
\right\|_2
\notag\\
& \overset{(d)}{=} \sup_{\substack{ \|\boldsymbol{b}\|_1 \leq 1\\
		\theta_k\in[0,2\pi)\\ 
		\|\boldsymbol{e}\|_2\leq C_{\text{e}}}}  \left\|  \boldsymbol{U}^{\text{H}}  \left[\boldsymbol{e} 
+ \boldsymbol{a}(\theta) \right]
\right\|_2  \|\boldsymbol{b}\|_1 
\notag \\  
& = \sup_{\substack{ 
		\|\boldsymbol{e}\|_2\leq C_{\text{e}}
		\\
		\theta\in[0,2\pi) }}  \left\|  \boldsymbol{U}^{\text{H}}  \left[\boldsymbol{e} 
+ \boldsymbol{a}(\theta) \right]
\right\|_2, 
\notag 
\end{align}
where  $(a)$ is from the definition of atomic norm with gain-phase errors, $(b)$ is obtained with the definition of inner product between matrices, and $(c)$ is given by
\begin{align}
\boldsymbol{d}_k=\frac{\boldsymbol{U}^{\text{T}} (\operatorname{diag}\{\boldsymbol{e}\}^{\text{H}}+\boldsymbol{I}) \boldsymbol{a}^*(\theta_k)}{\left\|\boldsymbol{U}^{\text{T}} (\operatorname{diag}\{\boldsymbol{e}\}^{\text{H}}+\boldsymbol{I}) \boldsymbol{a}^*(\theta_k)\right\|_2}.
\end{align} 
For the equation (d), we formulate $\boldsymbol{e}'\triangleq \operatorname{diag}(\boldsymbol{e})\boldsymbol{a}(\theta) \in\mathbb{C}^{N\times 1}$, where  we use the steering vector $\boldsymbol{a}(\theta)$ ($\theta\in [0,2\pi)$) and a vector $\boldsymbol{e}\in\mathbb{C}^{N\times 1}$ ($\|\boldsymbol{e}\|_2\leq C_{\text{e}}$). Then, the $\ell_2$ norm of $\boldsymbol{e}'$ can be simplified as 	\begin{align}
		\|\boldsymbol{e}'\|^2_2 & =\|\operatorname{diag}(\boldsymbol{e})\boldsymbol{a}(\theta)\|^2_2 \\
		& =\boldsymbol{e}^{\text{H}}\operatorname{diag}(\boldsymbol{a}^*(\theta))\operatorname{diag}(\boldsymbol{a}(\theta)) \boldsymbol{e} =\|\boldsymbol{e}\|^2_2\leq C^2_{\text{e}}.\notag 
	\end{align}
	Therefore, we have 
	\begin{align}\label{eq8}
		& \sup_{\substack{  \theta_k\in[0,2\pi)\\ 
		\|\boldsymbol{e}\|_2\leq C_{\text{e}}}} \left\|  \boldsymbol{U}^{\text{H}} \underbrace{\operatorname{diag}\{\boldsymbol{e}\} \boldsymbol{a}(\theta) }_{\boldsymbol{e}'}
+\boldsymbol{U}^{\text{H}} \boldsymbol{a}(\theta) 
\right\|_2 \notag \\
&\quad  =\sup_{\substack{  \theta_k\in[0,2\pi)\\ 
		\|\boldsymbol{e}'\|_2\leq C_{\text{e}}}} \left\|  \boldsymbol{U}^{\text{H}} \left[\boldsymbol{e}'
+ \boldsymbol{a}(\theta) \right]
\right\|_2. 
	\end{align}
Then, the following equality in (d) can be obtained, where we just reuse the notation $\boldsymbol{e}$ instead of $\boldsymbol{e}'$ in (\ref{eq8}) to avoid introducing additional symbol $\boldsymbol{e}'$. 

Then, for the constraint  $\|\boldsymbol{U}\|^*_{\tilde{\mathcal{A}}}\leq \tau$, we build the following positive semidefinite matrix 
\begin{align}
\begin{bmatrix}
\boldsymbol{Q} & \boldsymbol{U}\\
\boldsymbol{U}^{\text{H}} & \boldsymbol{W}
\end{bmatrix}\succeq 0,
\end{align} 
where $\boldsymbol{Q}$ and $\boldsymbol{W}$ are the Hermitian matrices. With the Schur complement, when $\boldsymbol{W}$ is invertible, the matrix is positive semidefinite if and only if we have
\begin{align}
&\boldsymbol{Q}\succeq 0,\\
&\boldsymbol{Q}-\boldsymbol{U}\boldsymbol{W}^{-1}\boldsymbol{U}^{\text{H}} \succeq 0.
\end{align} 
Therefore, for any vector $\boldsymbol{t}\in\mathbb{C}^{N\times 1}$, we have $\boldsymbol{t}^\text{H}\boldsymbol{Q}\boldsymbol{t}\geq\boldsymbol{t}^\text{H} \boldsymbol{U}\boldsymbol{W}^{-1}\boldsymbol{U}^{\text{H}}\boldsymbol{t}$. By selecting $\boldsymbol{W}=\boldsymbol{\tau}^2\boldsymbol{I}_P$, we obtain 
\begin{align}\label{schur}
\left \| \boldsymbol{U}^{\text{H}}\boldsymbol{t}\right\|_2^2 \leq \tau^2\boldsymbol{t}^\text{H}\boldsymbol{Q}\boldsymbol{t}.
\end{align}  
When the gain-phase errors are considered, we formulate $\boldsymbol{t} = \boldsymbol{a}(\theta)+\boldsymbol{e}$, and we have
\begin{align}\label{21}
 &\left\|  \boldsymbol{U}^{\text{H}}  \left[\boldsymbol{e} 
+ \boldsymbol{a}(\theta) \right]
\right\|_2^2 \leq \tau^2 \left[\boldsymbol{e} 
+ \boldsymbol{a}(\theta) \right]^\text{H}\boldsymbol{Q}\left[\boldsymbol{e} 
+ \boldsymbol{a}(\theta) \right]\\
&\quad = \tau^2 \left( 
 \boldsymbol{e}^\text{H}\boldsymbol{Q}\boldsymbol{e} 
 +2\mathcal{R}\{ \boldsymbol{e}^\text{H}\boldsymbol{Q}\boldsymbol{a}(\theta) \}+  \boldsymbol{a}^\text{H}(\theta)\boldsymbol{Q} \boldsymbol{a}(\theta)  \right)\notag\\
 &\quad \leq \tau^2\left( C_{\text{e}}
 \frac{(\boldsymbol{Q}\boldsymbol{e} )^\text{H}}{\|\boldsymbol{Q}\boldsymbol{e} \|_2} \boldsymbol{Q}\boldsymbol{e} 
 +2\mathcal{R}\{ \boldsymbol{e}^\text{H}\boldsymbol{Q}\boldsymbol{a}(\theta) \}+  \boldsymbol{a}^\text{H}(\theta)\boldsymbol{Q} \boldsymbol{a}(\theta)  \right)\notag\\
 &\quad \leq  \tau^2 \left( C_{\text{e}}
 \|\boldsymbol{Q}\boldsymbol{e}\|_2 
 +2 C_{\text{e}}\|\boldsymbol{Q}\boldsymbol{a}(\theta) \|_2+  \boldsymbol{a}^\text{H}(\theta)\boldsymbol{Q} \boldsymbol{a}(\theta)  \right)\notag\\
  & \quad=  \tau^2 \left( C_{\text{e}}\sqrt{\|\boldsymbol{Q}\boldsymbol{e}\|^2_2 }
  +2 C_{\text{e}}\sqrt{\|\boldsymbol{Q}\boldsymbol{a}(\theta) \|_2^2}+  \boldsymbol{a}^\text{H}(\theta)\boldsymbol{Q} \boldsymbol{a}(\theta)  \right)\notag
\end{align}
Since $\|\boldsymbol{e}\|_2\leq C_{\text{e}}$ and $\boldsymbol{Q}\succeq 0$, we have $  \|\boldsymbol{Q}\boldsymbol{e}\|^2_2 \leq C^2_{\text{e}}\|\boldsymbol{Q}\|^2_2 $, where  $\|\boldsymbol{Q}\|_2\triangleq \lambda_{\max}(\boldsymbol{Q})$, and  $\lambda_{\max}(\boldsymbol{Q})$ is the largest singular value of $\boldsymbol{Q}$. Additionally, we have $\|\boldsymbol{Q}\boldsymbol{a}(\theta) \|_2^2 \leq N \|\boldsymbol{Q}\|^2 _2$.  Therefore, we can simplified (\ref{21}) as
\begin{align}
 \left\|  \boldsymbol{U}^{\text{H}}  \left[\boldsymbol{e} 
+ \boldsymbol{a}(\theta) \right]
\right\|_2^2 & \leq \tau^2\left[(C_{\text{e}}+2\sqrt{N})C_{\text{e}} \|\boldsymbol{Q}\|_2+  \boldsymbol{a}^\text{H}(\theta)\boldsymbol{Q} \boldsymbol{a}(\theta) \right]. 
\end{align} 
When $\boldsymbol{Q}$ satisfies the following condition 
\begin{align}\label{Q}
& \sum_n Q_{n,n+k}= 0 \ (k\neq 0)\\
& \operatorname{Tr}(\boldsymbol{Q})+(C_{\text{e}}+2\sqrt{N})C_{\text{e}} \left\|\boldsymbol{Q}\right\|_2 -1\leq 0, \notag
\end{align}
we have $ \left\|  \boldsymbol{U}^{\text{H}}  \left[\boldsymbol{e} 
+ \boldsymbol{a}(\theta) \right]
\right\|_2^2\leq \tau^2 $. Therefore, substitute into (\ref{17}) and the constraint $\|\boldsymbol{U}\|^*_{\tilde{\mathcal{A}}}\leq \tau$ is satisfied, then, the dual problem (\ref{dual13}) can be simplified as the SDP problem in (\ref{sdp}).

\section{The Proof for Proposition~\ref{prop4}} \label{ap3}
The entries of $\boldsymbol{N}$ follow the zero-mean Gaussian distribution with the variance being $\sigma_{\text{N}}^2$. The dual norm of $\boldsymbol{N}$ with the definition in (\ref{17}) can be simplified as 
\begin{align}
& \mathcal{E}\left\{\|\boldsymbol{N}\|^*_{\tilde{\mathcal{A}}}\right\}
 = \mathcal{E}\left\{ \sup_{\substack{ 
		\|\boldsymbol{e}\|_2\leq C_{\text{e}} 
		\\
		\theta\in[0,2\pi) }} \left\|  \boldsymbol{N}^{\text{H}}  \left[\boldsymbol{e} 
+ \boldsymbol{a}(\theta) \right]
\right\|_2\right\}\notag\\
&\quad \leq \mathcal{E}\left\{ \sup_{
		\|\boldsymbol{e}\|_2\leq C_{\text{e}} 
	} \left\|  \boldsymbol{N}^{\text{H}} \boldsymbol{e} 
\right\|_2\right\}+
\mathcal{E}\left\{ \sup_{ 
		\theta\in[0,2\pi) } \left\|  \boldsymbol{N}^{\text{H}}   \boldsymbol{a}(\theta) 
\right\|_2\right\}\notag\\ 
& \quad\leq C_{\text{e}}  \mathcal{E}\left\{  \left\|  \boldsymbol{N} \right\|_2\right\}+
\sigma_{\text{N}}\sqrt{4NP\ln N},
\end{align}
where $\mathcal{E}\left\{ \sup_{ 
	\theta\in[0,2\pi) } \left\|  \boldsymbol{N}^{\text{H}}   \boldsymbol{a}(\theta) 
\right\|_2\right\}\leq \sigma_{\text{N}}\sqrt{4NP\ln N}$ is obtained from Lemma 5.1 of \cite{DBLP:journals/corr/LiYTW17}. Additionally, we can obtain the upper bound of $\mathcal{E}\left\{  \left\|  \boldsymbol{N} \right\|_2\right\}$ as two types. The first type is formulated based on $\mathcal{E}\left\{  \left\|  \boldsymbol{N} \right\|_2\right\} \leq \mathcal{E}\left\{  \left\|  \boldsymbol{N} \right\|_F\right\}$. Since $ \left\|  \boldsymbol{N} \right\|_F$ follows chi distribution $\left\|  \boldsymbol{N} \right\|_F\sim\chi_{NP}$, we can obtain
\begin{align}
\mathcal{E}\left\{  \left\|  \boldsymbol{N} \right\|_F\right\} = \sqrt{2}\sigma_{\text{N}}\frac{\Gamma((NP+1)/2)}{\Gamma(NP/2)},
\end{align} 
where the gamma function is defined as $\Gamma(x)\triangleq \int^{\infty}_0 z^{x-1}e^{-z}\,dz$. Then, we have 
\begin{align}
\mathcal{E}\left\{  \left\|  \boldsymbol{N} \right\|_2\right\} \leq \mathcal{E}\left\{  \left\|  \boldsymbol{N} \right\|_F\right\} = \sqrt{2}\sigma_{\text{N}}\frac{\Gamma((NP+1)/2)}{\Gamma(NP/2)}\triangleq \text{bd}_1. 
\end{align}

With the probability more than $1-2e^{-t^2/2}$, the second type of upper bound can be formulated as 
\begin{align}
\mathcal{E}\left\{  \left\|  \boldsymbol{N} \right\|_2\right\} \leq \left(\sqrt{N}+\sqrt{P}+t\right)\sigma_{\text{N}}\triangleq \text{bd}_2,
\end{align}
where the upper bound is obtained from Theorem 5.35 of \cite{2010arXiv1011.3027V}.

Finally, the upper bound of $\mathcal{E}\left\{\|\boldsymbol{N}\|^*_{\tilde{\mathcal{A}}}\right\}$ can be obtained as
\begin{align}
\mathcal{E}\left\{\|\boldsymbol{N}\|^*_{\mathcal{A}}\right\}\leq \min\left\{\text{bd}_1,\text{bd}_2\right\}C_{\text{e}} +
\sigma_{\text{N}}\sqrt{4NP\ln N}.
\end{align}

\section{The entries of Fisher information matrix}\label{entry}

For the Fisher information $ \boldsymbol{F}=
\begin{bmatrix}
	\boldsymbol{F}_{1,1}&\boldsymbol{F}_{1,2}\\
	\boldsymbol{F}_{2,1}&\boldsymbol{F}_{2,2}
\end{bmatrix}$, the entries can be obtained as follows:
\begin{itemize}
	\item  The $k_1,k_2$-th entry of Fisher information matrix $\boldsymbol{F}_{1,1}$  can be obtained as 
	\begin{align}
	F^{1,1}_{k_1,k_2}
	& =  	\frac{\partial  
		\ln \det\{\boldsymbol{C}\}
	}{\partial\theta_{k_1}\partial\theta_{k_2}}  +
	\mathcal{E}\left\{ 	\frac{\partial  
		\boldsymbol{y}^{\text{H}}\boldsymbol{C}^{-1}\boldsymbol{y}
	}{\partial\theta_{k_1}\partial\theta_{k_2}} 
	\right\},
	\end{align}
	where the first term can be obtained as
	\begin{align}
	&
	\frac{\partial  
		\ln \det\{\boldsymbol{C}\}
	}{\partial\theta_{k_1}\partial\theta_{k_2}}=\operatorname{Tr}\left\{\frac{\partial 
		\boldsymbol{C}^{-1}\frac{\partial \boldsymbol{C}}{\partial \theta_{k_1}}  
	}{\partial\theta_{k_2}}\right\}\\
	&\qquad = \operatorname{Tr}\left\{\frac{ \partial 
		\boldsymbol{C}^{-1}
	}{\partial\theta_{k_2}} \frac{\partial \boldsymbol{C}}{\partial \theta_{k_1}} \right\}  
	+\operatorname{Tr}\left\{\boldsymbol{C}^{-1}   
	\frac{\partial \boldsymbol{C}}{\partial \theta_{k_1}\partial\theta_{k_2}}  
	\right\},\notag
	\end{align}
	and the second term is
	\begin{align}
	\mathcal{E}\left\{\frac{\partial  
		\boldsymbol{y}^{\text{H}}\boldsymbol{C}^{-1}\boldsymbol{y}
	}{\partial\theta_{k_1}\partial\theta_{k_2}}\right\} 
	& = \operatorname{Tr}\left\{
	\frac{\partial  
		\boldsymbol{C}^{-1}
	}{\partial\theta_{k_1}\partial\theta_{k_2}}\boldsymbol{C}\right\}.
	\end{align}
	Therefore, $F^{1,1}_{k_1,k_2}$ can be simplified as
	\begin{align}
	F^{1,1}_{k_1,k_2} 
	& =  \operatorname{Tr}\left\{\boldsymbol{C}^{-1}\frac{\partial  
		\boldsymbol{C}
	}{\partial\theta_{k_1}} \boldsymbol{C}^{-1} \frac{\partial  \boldsymbol{C}  	}{ \partial\theta_{k_2}}   
	\right\}.
	\end{align}

	\item The $k_1,k_2$-th entry of Fisher information matrix $\boldsymbol{F}_{1,2}$  can be obtained as 
	\begin{align}
	F^{1,2}_{k_1,k_2}
	& =  	\frac{\partial  
		\ln \det\{\boldsymbol{C}\}
	}{\partial\theta_{k_1}\partial g_{k_2}}  +
	\mathcal{E}\left\{ 	\frac{\partial  
		\boldsymbol{y}^{\text{H}}\boldsymbol{C}^{-1}\boldsymbol{y}
	}{\partial\theta_{k_1}\partial g_{k_2}} 
	\right\},
	\end{align}
	where the first term is simplified as 
	\begin{align}
	&\frac{\partial  
		\ln \det\{\boldsymbol{C}\}
	}{\partial\theta_{k_1}\partial g_{k_2}} =\operatorname{Tr}\left\{\frac{\partial 
		\boldsymbol{C}^{-1}\frac{\partial \boldsymbol{C}}{\partial \theta_{k_1}}  
	}{\partial g_{k_2}}\right\}\\
	&\quad = \operatorname{Tr}\left\{\frac{ \partial 
		\boldsymbol{C}^{-1}
	}{\partial g_{k_2}} \frac{\partial \boldsymbol{C}}{\partial \theta_{k_1}} \right\}  
	+\operatorname{Tr}\left\{\boldsymbol{C}^{-1}   
	\frac{\partial \boldsymbol{C}}{\partial \theta_{k_1}\partial g_{k_2}}  
	\right\},\notag
	\end{align}
	and the second term is
	\begin{align}
	\mathcal{E}\left\{\frac{\partial  
		\boldsymbol{y}^{\text{H}}\boldsymbol{C}^{-1}\boldsymbol{y}
	}{\partial\theta_{k_1}\partial g_{k_2}}\right\} 
	& = \operatorname{Tr}\left\{
	\frac{\partial  
		\boldsymbol{C}^{-1}
	}{\partial\theta_{k_1}\partial g_{k_2}}\boldsymbol{C}\right\}.
	\end{align}
	Therefore, $F^{1,2}_{k_1,k_2}$ can be simplified as
	\begin{align}
	F^{1,2}_{k_1,k_2} 
	&= \operatorname{Tr}\Bigg\{  
	\frac{\partial  
		\boldsymbol{C} 
	}{\partial\theta_{k_1} } \boldsymbol{C}^{-1}	\frac{\partial   
		\boldsymbol{C} 
	}{\partial  g_{k_2}} \boldsymbol{C}^{-1}
	\Bigg\}.
	\end{align}
	
	\item Similarly, we can get the $k_1,k_2$-th entry of Fisher information matrix $\boldsymbol{F}_{2,1}$  as
	\begin{align}
	F^{2,1}_{k_1,k_2}&=-\mathcal{E}\left\{\left.
	\frac{\partial \ln f(\boldsymbol{y};\boldsymbol{\theta},\boldsymbol{g})}{\partial g_{k_1}\partial \theta_{k_2}}\right|\boldsymbol{\theta},\boldsymbol{g}
	\right\}\notag\\
	& = \operatorname{Tr}\Bigg\{  
	\frac{\partial  
		\boldsymbol{C} 
	}{\partial g_{k_1} } \boldsymbol{C}^{-1}	\frac{\partial   
		\boldsymbol{C} 
	}{\partial  \theta_{k_2}} \boldsymbol{C}^{-1}
	\Bigg\}.
	\end{align}

	\item The $k_1,k_2$-th entry of Fisher information matrix $\boldsymbol{F}_{2,2}$  can be obtained as 
	\begin{align}
	F^{2,2}_{k_1,k_2}
	& =     \operatorname{Tr}\left\{\boldsymbol{C}^{-1}\frac{\partial  
		\boldsymbol{C}
	}{\partial g_{k_1}} \boldsymbol{C}^{-1} \frac{\partial  \boldsymbol{C}  	}{ \partial g_{k_2}}   
	\right\}.
	\end{align}
\end{itemize}
	
From the entries of block matrices $\boldsymbol{F}_{1,1},\boldsymbol{F}_{1,2}$, $\boldsymbol{F}_{2,1}$ and $\boldsymbol{F}_{2,2}$, we can find that the expressions of $\frac{\partial  
	\boldsymbol{C}
}{\partial\theta_{k}}$ and $\frac{\partial  
\boldsymbol{C}
}{\partial g_{k}}$ must be calculated, so we can obtain the expressions as follows:
\begin{itemize}
	\item For $\frac{\partial  
		\boldsymbol{C}
	}{\partial\theta_{k}}$,  we have
\begin{align}
\frac{\partial  
\boldsymbol{C}
}{\partial\theta_{k}} 
= & \left(\boldsymbol{I}\otimes \boldsymbol{G}\frac{\partial  
\boldsymbol{A}
}{\partial\theta_{k}}\right)\boldsymbol{B}(\boldsymbol{I}\otimes (\boldsymbol{GA})^{\text{H}})\notag\\
&+(\boldsymbol{I}\otimes \boldsymbol{GA})\boldsymbol{B} \left( \boldsymbol{I}\otimes\frac{\partial  
\boldsymbol{A}^{\text{H}}
}{\partial\theta_{k}}\boldsymbol{G}^{\text{H}}\right)
\end{align}
where $\frac{\partial  
	\boldsymbol{A}
}{\partial\theta_{k}}$ is expressed as $
\frac{\partial  
\boldsymbol{A}
}{\partial\theta_{k}} = \begin{bmatrix}
\boldsymbol{0},	\frac{\partial  
\boldsymbol{a}(\theta_k)
}{\partial\theta_{k}},\boldsymbol{0}
\end{bmatrix}$, and $\frac{\partial  
\boldsymbol{a}(\theta_k)
}{\partial\theta_{k}}$ can be obtained easily.
\item For $\frac{\partial  
	\boldsymbol{C}
}{\partial g_{k}}$, we can obtain
\begin{align}
\frac{\partial  
	\boldsymbol{C}
}{\partial g_{k}} 
& = \left(\boldsymbol{I}\otimes \frac{\partial  
	\boldsymbol{G}
}{\partial g_{k}} \boldsymbol{A}\right)\boldsymbol{B}(\boldsymbol{I}\otimes (\boldsymbol{GA})^{\text{H}}), \notag
\end{align}
where $\frac{\partial  
	\boldsymbol{G}
}{\partial g_{k}}$ can be obtained easily. 
\end{itemize}

\bibliographystyle{IEEEtran}
\bibliography{IEEEabrv.bib,References.bib}

\begin{IEEEbiography}[{\includegraphics[width=1in,height=1.25in,clip,keepaspectratio]{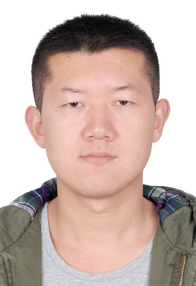}}]{Peng Chen (S'15-M'17)}
	was born in Jiangsu, China in 1989. He received the B.E. degree in 2011 and the Ph.D. degree in 2017, both from the School of Information Science and Engineering, Southeast University, China. From Mar. 2015 to Apr. 2016, he was a Visiting Scholar in the Electrical Engineering Department, Columbia University, New York, NY, USA.
	
	He is now an associate professor at the State Key Laboratory of Millimeter Waves, Southeast University. His research interests include radar signal processing and millimeter wave communication.
	
\end{IEEEbiography}

\begin{IEEEbiography}[{\includegraphics[width=1in,height=1.25in,clip,keepaspectratio]{./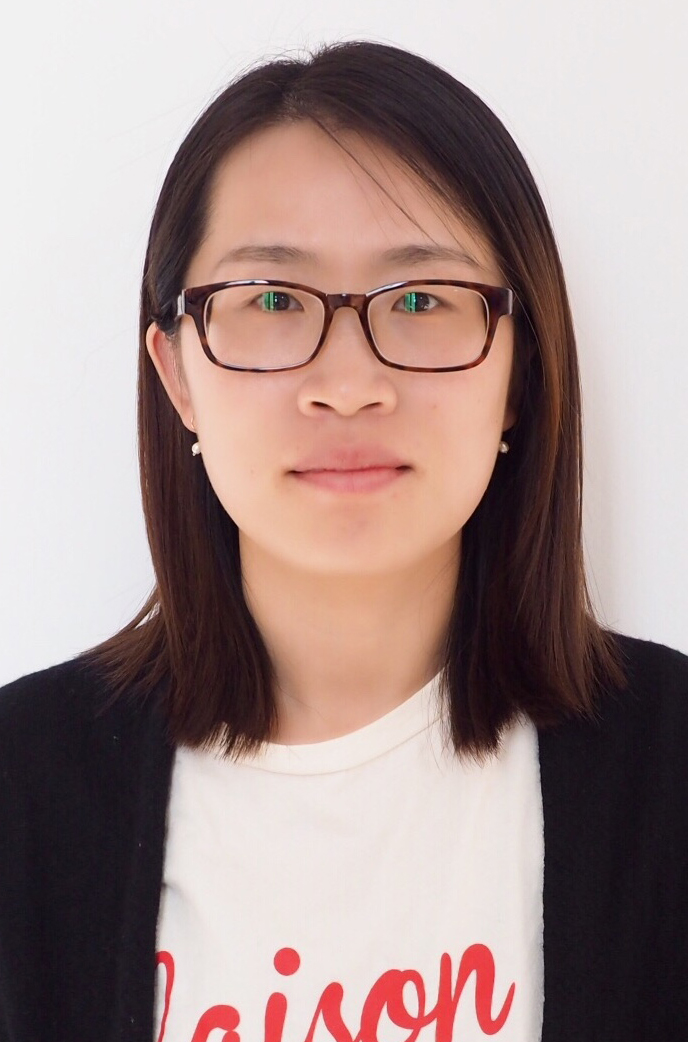}}]{Zhimin Chen (M'17)}  was born in Shandong, China, in 1985. She received the Ph.D. degree in information and communication engineering from the School of Information Science and Engineering, Southeast University, Nanjing, China in 2015. Since 2015, she has been with Shanghai Dianji University, Shanghai, China, where she is a Professor. Her research interests include array signal processing, vehicle communications and millimeter-wave communications.
\end{IEEEbiography}

\begin{IEEEbiography}[{\includegraphics[width=1in,height=1.25in,clip,keepaspectratio]{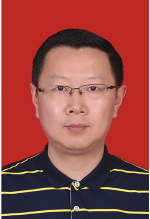}}]{Zhenxin Cao (M'18)}
	was born in May 1976. He received the M. S. degree  in 2002 from Nanjing University of Aeronautics and Astronautics, China, and the Ph.D. degree in 2005 from  the School of Information Science and Engineering, Southeast University, China. From 2012 to 2013, he was a Visiting Scholar in North Carolina State University. 
	
	Since 2005, he has been with the State Key Laboratory of Millimeter Waves, Southeast University, where he is a Professor. His research interests include antenna theory and application. 
	
\end{IEEEbiography}

\begin{IEEEbiography}[{\includegraphics[width=1in,height=1.25in,clip,keepaspectratio]{./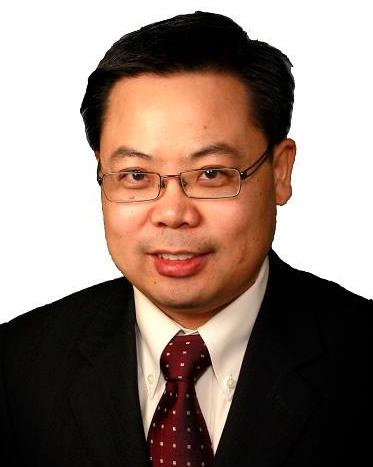}}]{Xianbin Wang (S'98-M'99-SM'06-F'17)} is a Professor and Tier-I Canada Research Chair at Western University, Canada. He received his Ph.D. degree in electrical and computer engineering from National University of Singapore in 2001.
	
	Prior to joining Western, he was with Communications Research Centre Canada (CRC) as a Research Scientist/Senior Research Scientist between July 2002 and Dec. 2007. From Jan. 2001 to July 2002, he was a system designer at STMicroelectronics, where he was responsible for the system design of DSL and Gigabit Ethernet chipsets.  His current research interests include 5G technologies, Internet-of-Things, communications security, machine learning and locationing technologies. Dr. Wang has over 300 peer-reviewed journal and conference papers, in addition to 26 granted and pending patents and several standard contributions.
	
	Dr. Wang is a Fellow of Canadian Academy of Engineering, a Fellow of IEEE and an IEEE Distinguished Lecturer. He has received many awards and recognitions, including Canada Research Chair, CRC President’s Excellence Award, Canadian Federal Government Public Service Award, Ontario Early Researcher Award and five IEEE Best Paper Awards. He currently serves as an Editor/Associate Editor for IEEE Transactions on Communications, IEEE Transactions on Broadcasting, and IEEE Transactions on Vehicular Technology and He was also an Associate Editor for IEEE Transactions on Wireless Communications between 2007 and 2011, and IEEE Wireless Communications Letters between 2011 and 2016. Dr. Wang was involved in many IEEE conferences including GLOBECOM, ICC, VTC, PIMRC, WCNC and CWIT, in different roles such as symposium chair, tutorial instructor, track chair, session chair and TPC co-chair.
	
\end{IEEEbiography}

\end{document}